\newcommand\addtag{\refstepcounter{equation}\tag{\theequation}}
\tikzset{
    solid node/.style={circle,draw,inner sep=1.5,fill=black},
    hollow node/.style={circle,draw,inner sep=1.5}}
\def\BR{\mathbb{R}}
\newtheorem{thm}{Theorem}
\newtheorem{dfn}{Definition}
\newtheorem{lma}{Lemma}
\newtheorem{prop}{Proposition}
\begin{document}

\title{Contracts for acquiring information}
\author{Aubrey Clark and Giovanni Reggiani \footnote{We are grateful to our PhD advisors Eric Maskin, Oliver Hart, Mihai Manea, Bengt Holmstrom, Bob Gibbons, and Abhijit Banerjee. For comments on various versions of the paper we thank Gabriel Carroll, Ben H{\'e}bert, Bengt Holmstr{\"o}m, Ryota Iijima, Divya Kirti, Jonathan Libgober, Xiaosheng Mu, Jeff Picel, Gleb Romanyuk, Jann Spiess, Eduard Talam{\`a}s, Yao Zeng, and audiences from the Theory and Contracts lunches at Harvard and MIT. Email: aubs.bc@gmail.com} }

\date{\today}

\maketitle

\begin{abstract}
Information is costly for an agent to acquire and unobservable to a principal. We show that any Pareto optimal contract has a decomposition into a fraction of output, a state-dependent transfer, and an optimal distortion. Under this decomposition: 1) the fraction of output paid is increasing in the set of experiments available to the agent, 2) the state-dependent transfer indexes contract payments to account for differences in output between states, 3) the optimal distortion exploits complementarities in the cost of information acquisition: experiment probabilities unalterable via contract payments stuck against liability limits are substituted for, the substitution occurring according to complementarities in the cost of information acquisition, and 4) if and only if the agent's cost of experimentation is mutual information, the optimal distortion takes the form of a decision-dependent transfer.
\end{abstract}

\section{Introduction}

Desiring to learn the state of the world a principal hires an expert agent to generate data. Sadly, the principal is not able to observe the agent's selection of data generating process and so ties compensation to verifiable outputs correlated with this selection.

As an example, consider the partnership between someone deciding on a portfolio of stocks---the principal---and a financial advisor---the agent. A compensation scheme could depend on the suggested portfolio and the terminal stock prices.

This paper describes the form of the optimal compensation scheme. We show that it has an intuitive structure: a fraction of output, a state-dependent transfer, and an optimal distortion

 When the cost is mutual information, the optimal distortion only depends on the decision the agent takes. The resulting contract can be described simply: stipulation of the liability limits, a punishment or reward for each decision based on its ex-ante risk of causing bankruptcy, an indexing payment so the agent is not rewarded purely for finding himself in a lucky state, and a piece rate based on the agent's information acquiring capacity.
 
 For cost functions that are given by the average distance between prior and posteriors, the optimal distortion can vary by state; its form depending on complementarities in the cost of acquiring information about different states of the world. If a contract payment in some state is constrained by a liability limit, then payments in complementary states are instead used to influence the agent.

To illustrate the model, consider the already mentioned example of a risk-neutral investor and a risk-neutral financial advisor. The investor must decide whether to buy a risky (V) or safe (S) asset. The price of the risky asset falls in state $\theta_F$ and rises in state $\theta_R$.

\begin{figure}[h]
\centering
\begin{tabular}{ r !{\vrule width0.9pt}   c  c }
 $y(d,\theta)$ &  $\theta_F$ &  $\theta_R$  \\ \Xhline{0.9pt}
 $V$  & 0 & 10   \\
 $S$  & 5 & 5  
\end{tabular}
\end{figure}

Take the prior probability of state $\theta_F$ to be $2/3$. Absent evidence of state $\theta_F$ the investor will wish to invest in the safe asset. The possibility of gains from trade arises if there is a financial advisor who can acquire information about the likelihood of state $\theta_F$. Information comes in the form of the observation of a random variable whose outcome is correlated with the state. In general we can label signal realizations with the decisions they make optimal, and so an experiment is described by probabilities $p(d|\theta)$ for $d = V,S$ and $\theta = \theta_F, \theta_R$. The principal incurs an experiment cost $c(p)$ and can only select experiments with cost less than $k>0$. Assuming that the state and decision are verifiable, a contract is described by payments $b(d, \theta)$ for $d = V,S$ and $\theta = \theta_F, \theta_R$. 

Consider the agent optimal contract subject to the agent receiving his outside option $0$. Without constraints placed on the available contracts, the usual result is that the investor ``sells the firm" to the financial advisor for a fee. This applies in our model as well, but is modified by the agent's limited capacity $k$---the financial advisor might exert maximal effort after being sold only a fraction $\alpha$ of the firm, and in such a case it is unnecessary to sell the entire firm.

If the financial advisor is protected by limited liability, so that contract payments less than 0 cannot be enforced, selling the firm---or a fraction of the firm---for a fixed fee will not be feasible as it will result in negative contract payments if the financial advisor purchases the risky asset $V$ and prices fall $\theta_F$. In this case, the optimal contract will exploit the agent's incentives not being altered by state-dependent transfers $\beta(\theta)$. In particular, the state-dependent transfer will be set so that the minimum payment in each state is $0$. This extracts maximum profit from the financial advisor without distorting his incentives for acquiring information.

Our main result shows that the optimal contract has the form $b(d,\theta) = \alpha y(d,\theta) - \beta(\theta) - \gamma(d,\theta)$ where the fraction $\alpha$ is chosen, as in the case without liability limits, so that the agent only just exhausts his capacity $k$ for acquiring information, the state-dependent transfer $\beta$ is chosen so that the minimum contract payments in each state touch the liability limits, and the optimal distortion $\gamma$
\[\gamma(d,\theta) = -\frac{1}{\pi(\theta)} \sum_{ d',\theta' } \frac{\partial^2 c(p)}{\partial p(d|\theta) \partial p(d'|\theta')} \frac{\lambda[d',\theta']}{\pi(\theta')},\] 
where $\lambda[d,\theta]$ is a non-negative dual variable for the constraint $0 \leq b(d,\theta)$. The optimal distortion recruits contract payments that are not bound against liability limits in order to influence the signal probabilities for contract payments that are. The extent to which this recruitment happens is modulated by the complementarity in the cost of acquiring different signals in different states and the value of the dual variable, which measures how costly a binding liability limit is to the principal and thus how important it is for him to find an alternative route to affecting the corresponding signal probability.

\paragraph{Related literature}

The first study of optimal incentives for information acquisition was completed by ~\cite{demski1987delegated}. They focused on optimal risk-sharing arrangements and on contracts that depend solely on output. Their main concern was the interaction between the ``planning phase'' (acquiring information) and the ``implementation phase'' (choosing a decision). Their basic point was that, even if implementation is costless (as it is in our model), the unverifiable nature of planning induces moral hazard in implementation since the outcome of the agent's decision provides information about his planning.

\cite{osband1989optimal} was led to the problem from his study of proper scoring rules---incentive schemes that elicit true beliefs (see~\cite{savage1971elicitation}). His goal was to characterize the best proper scoring rule eliciting a forecaster's estimate of the mean of a random variable. The forecaster draws observations of the random variable at a cost, and this cost as well as the number of observations the forecaster makes are unknown to the principal. His focus was on the merits of (a) screening by offering a menu of contracts versus (b) pitting forecasters in competition against one another.

More recently, ~\cite{zermeno2011principal,zermeno2012role} has developed a very general model of incentivizing information acquisition under liability limits. His analysis focuses on the interaction between the verifiable data on which a contract can depend (for instance, the decision may be verifiable but the state may not be) and: (1) the usefulness of menus of contracts, and (2) whether the principal or agent is tasked to take the decision. Menus are useful because they allow contract payments to partially depend on the state: following receipt of a signal the agent gets to choose a decision and a contract from the menu; since the signal is correlated with the state the menu functions as a tool for correlating contract payments with the state---a role that is otherwise forced upon the agent's choice of decision---with a more precise signal leading to better correlation. In general, menus do not allow perfect correlation between contract payments and the state, and for this reason decision making is usually ex-post inefficient---it is distorted so that the agent is more likely to take a decision that reveals the state. Regarding (2), the allocation of decision making authority is irrelevant when the principal and agent can infer from the verifiable data which decision was taken; when this is unclear, whether the principal or agent is tasked to take the decision affects the set of implementable outcomes.

\cite{Carroll16} builds on and refines this model to study the form of ``robust'' incentives for information acquisition: a principal does not know all experiments or experiment costs available to an agent and ranks contracts according to their minimum expected payoff among all experiments and experiment costs including a known set. He shows that the optimal contract is a \emph{restricted investment contract}: the set of decisions available to the agent is restricted, and for unrestricted decisions the optimal contract pays a fraction of output and a state-dependent transfer.


\section{Model}\label{framework}

There are two dates. At date $0$, a principal faces a choice from a finite set of \emph{decisions} $D$. The date $1$ payoff to the principal from a decision depends on the realization of one among a finite set of \emph{states} $\Theta$, as described by an \emph{output function} $y:D \times \Theta \to \BR$. The principal hires an agent to acquire information about the state and take a decision on his behalf but cannot either monitor how much care the agent takes acquiring information or observe the information the agent does acquire. The principal motivates the agent with a \emph{contract} $b:D \times \Theta \to \BR$ that depends on the decision the agent takes and the realized state of the world---both are verifiable.

At date $0$, the principal and agent agree to a contract $b$. At date $1$, before choosing a decision, the agent performs an \emph{experiment} $(X,\{p(\cdot| \theta)\}_{\theta \in \Theta})$, which consists of a finite set of \emph{signals} $X$ and a collection of probability distributions $\{p(\cdot|\theta)\}_{\theta \in \Theta} \subseteq \Delta(X)$ on the set of signals. A \emph{decision rule} $f:X\to \Delta(D)$ describes the agent's randomization over decisions upon observing each signal. 

An \emph{action profile} $(b,X,\{p(\cdot|\theta)\}_{\theta\in \Theta},f)$ consists of a contract, experiment, and decision rule. The agent has preferences over action profiles $$ (b,X,\{p(\cdot|\theta)\}_{\theta\in \Theta},f) $$ given by the expected value of contract payments less an experiment cost: 
$$
\sum_{\theta\in \Theta}\pi(\theta)\sum_{x\in X}p(x|\theta)\sum_{d\in D}f(d|x)b(d,\theta)-c(X,\{p(\cdot|\theta)\}_{\theta\in \Theta}).
$$
The distribution $\pi\in \Delta(\Theta)$ is the agent's \emph{prior belief} over states and the function $c$ is the agent's \emph{cost function}. For now, the cost function is only assumed to respect the ordering of experiments given by informativeness (see Blackwell's theorem\footnote{Experiment $(X,\{p(\cdot|\theta)\}_{\theta\in \Theta})$ is \emph{more informative} than experiment $(X',\{p'(\cdot|\theta)\}_{\theta\in \Theta})$ if for any prior $\pi$ and contract $b$, whenever some expected payoff can be achieved under the latter experiment, then it can be achieved under the former. That is, for each decision rule $f'$ on $X'$ there exists a decision rule $f$ on $X$ such that 
\[\sum_{\theta\in \Theta}\pi(\theta)\sum_{x'\in X'}p'(x'|\theta)\sum_{d\in D}f'(d|x')b(d,\theta) = \sum_{\theta\in \Theta}\pi(\theta)\sum_{x\in X}p(x|\theta)\sum_{d\in D}f(d|x)b(d,\theta).\]

Experiment $(X,\{p(\cdot|\theta)\}_{\theta\in \Theta})$ is \emph{sufficient} for experiment $(X',\{p'(\cdot|\theta)\}_{\theta\in \Theta})$ if there is a \emph{garbling function} $g:X \to \Delta(X')$ such that 
\[p'(x'|\theta) = \sum_{x\in X}p(x|\theta)g(x'|x).\] 
That is, experiment $(X',\{p'(\cdot|\theta)\}_{\theta\in \Theta})$ can be performed by randomizing over its outcomes according to $g$ where the randomization depends on the outcome of the experiment $(X,\{p(\cdot|\theta)\}_{\theta\in \Theta})$.

\emph{Blackwell's theorem} \cite{blackwell1953equivalent} states that an experiment is more informative than  another if and only if it is sufficient.\label{footnote1}}).

The principal has preferences over action profiles $(b,X,\{p(\cdot|\theta)\}_{\theta\in \Theta},f)$ given by the expected value of output less contract payments:
\begin{equation*}
\sum_{\theta\in \Theta}\pi(\theta)\sum_{x\in X}p(x|\theta)\sum_{d\in D}f(d|x)\left(y(d,\theta)-b(d,\theta)\right).
\end{equation*} 

Contracts are restricted to belong to a set $\mathscr{B}$ of \emph{feasible contracts} and experiments are restricted to belong to a set $\mathscr{E}$ of \emph{feasible experiments}. 

A contract $b$ satisfies the \emph{limited liability} constraints if $0\leq b(d,\theta)\leq y(d,\theta)$ for all $d\in D,\theta\in \Theta$. The set of feasible contracts are those satisfying the limited liability constraints. This may mean that the principal and agent are wealth constrained (with wealth normalized to zero or included in output $y$) or that the contract is subject to statutory liability limits (as is the case for a limited liability corporation).

In addition to the incremental cost of acquiring information represented by the cost function $c$, we assume that the agent faces a \emph{capacity constraint} $c(X,\{p(\cdot|\theta)\}_{\theta\in\Theta})\leq k$ that limits the experiments he can perform. The set of feasible experiments are those satisfying this constraint.

An action profile $(b,X,\{p(\cdot|\theta)\}_{\theta \in \Theta},f)$ is \emph{feasible} if the contract $b$ and experiment $(X,\{p(\cdot|\theta)\}_{\theta \in \Theta})$ are feasible, and there is no action profile $$ (b,X',\{p'(\cdot|\theta)\}_{\theta\in \Theta},f'), $$ consisting of the same contract and some feasible experiment, that the agent prefers. 

An action profile can be \emph{improved upon} if there is a feasible profile that makes either the agent or principal better off without making the other worse off. An action profile is \emph{Pareto optimal} if it is feasible and cannot be improved upon.

\paragraph{Normalizing the experiment and decision rule.}\label{normalize}

A decision rule is a garbling function (see footnote~\ref{footnote1}). Thus, given action profile $(b,X,\{p(\cdot| \theta)\}_{\theta\in \Theta},f)$, we may define a garbled experiment $ (D,\{p'(\cdot|\theta)\}_{\theta\in \Theta}) $ by $$ p'(d|\theta) = \sum_{x\in X}p(x|\theta)f(d|x) $$ and choose the decision rule $f':D\to \Delta(D)$ that maps each decision to the degenerate distribution on that decision. The contract has the same expected payoff under this experiment and decision rule and since the former experiment is sufficient for the latter it is weakly more costly. Thus, it is without loss to set the agent's set of signals to be $D$ and the agent's decision rule $f: D\to \Delta(D)$ to map each decision to the degenerate distribution on that decision.

Given this normalization, we will write $p$ for the experiment $(D,\{p(\cdot | \theta)\}_{\theta\in \Theta})$, and write $(b,p)$ for the action profile $(b, D,\{p(\cdot | \theta)\}_{\theta\in \Theta},f)$ in which $f$ maps each decision to the degenerate distribution on that decision. We will write $E_p[b]-c(p)$ and $E_p[y-b]$ for the agent's and principal's utility from action profile $(b,p)$. For experiment $p$, we will write $p(d,\theta)$ for the joint probability $p(d|\theta)\pi(\theta)$, $p(d)$ for the total probability $\sum_\theta \pi(\theta)p(d|\theta)$, and $p(\theta|d)$ for the posterior probability $p(d|\theta)\pi(\theta)/\sum_{\theta'} \pi(\theta')p(d|\theta')$.

\section{Results}\label{Results}

\paragraph{Summary of results}

Since both principal and agent are risk-neutral, a Pareto optimal contract is one that maximizes welfare---expected output less the cost of the agent's chosen experiment---subject to the agent obtaining a given level of utility. A way to characterize all Pareto optimal contracts is to start with the contract equal to output and to then consider feasible alterations that increase the principal's payoff at minimal loss to welfare. Our setup provides two ways to do this without altering the agent's behavior, and thus without any loss in welfare. 

First, since the agent is risk-neutral and does not control the state, altering contract payments by a state-dependent transfer does not alter his behavior. Second, if the agent's capacity constraint binds then a slight scaling down of the contract---multiplying it by a number slightly less than one---will not alter the agent's behavior provided the capacity constraint continues to bind.


The interval of utilities achievable for the agent by applying these maneuvers to output, correspond to the points on the Pareto frontier in which welfare is maximized. Proposition~\ref{prop2} characterizes these contracts. For lower agent utilities there is no way to increase the principal's payoff without reducing welfare, and thus for such contracts the agent's liability limit binds for at least one decision in each state and the agent's capacity constraint does not bind.

We characterize these second-best Pareto optimal contracts in Proposition~\ref{prop3}. These contracts share the features of first-best contracts in that they pay a fraction of output less a state-dependent transfer. But they also involve an optimal distortion, which we give a formula for in this proposition.

In section~\ref{poscomp}, we provide a simplified version of this formula specialized to the class of posterior separable cost functions studied in~\cite{caplin2013behavioral} and~\cite{HebertWoodford16}. Expected reduction in Shannon entropy belongs to this class. Under this cost function the optimal distortion reduces to a decision-dependent transfer in which the agent is punished for taking decisions that are likely to make his liability limits bind and rewarded for decisions that are likely to make the principal's liability limits bind.




\paragraph{First-best contracts}\label{First-best contracts}

In the standard principal-agent setup with risk-neutral parties the \emph{first-best} contracts (those that maximize welfare $E_p[y]-c(p)$) are given by output less a constant, $b=y-t$. If the principal ``sells the firm'' to the agent for a fee $t$, then the agent necessarily maximizes welfare. In problems of information acquisition there are additional first-best contracts. Given that the agent is risk-neutral and does not control the state, modifying a contract by a state-dependent transfer does not alter his incentives. Thus, any $y-\beta$ with $\beta:\Theta \to \BR$ is first-best.

Introducing a capacity constraint into the problem introduces more first-best contracts. There is $\alpha' \in [0,1]$ such that for any first-best contract $y-\beta$ all contracts $\alpha y - \beta$ with $\alpha\in [\alpha',1]$ are also first-best. When the agent's capacity constraint does not bind under contract $y-\beta$, then $\alpha'=1$; if it does bind, $\alpha'$ is less than $1$.

Conversely, if the cost function, viewed as a function on $\BR^{|D \times \Theta|}$, is differentiable, and each first-best experiment assigns positive probability to each decision in each state, then all first-best contracts take this form. 

\begin{prop}\label{prop2}
Define $\alpha'$ by
\[\alpha' = \sup\left\{\alpha \in [0,1] : c(p)< k \text{ for all feasible } p \text{ optimal to contract } \alpha y \right\}.\]
If a contract has the form $\alpha y - \beta$, with $\alpha \geq \alpha'$ and $\beta: \Theta \to \BR$, then it is first-best.

Conversely, suppose the cost function is differentiable and all first-best experiments assign positive probability to each decision in each state. Then, if $b$ is a first best contract, $b = \alpha y - \beta$ for some $\alpha \geq \alpha'$ and $\beta:\Theta \to \BR$.
\end{prop}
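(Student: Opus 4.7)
I would prove the two implications separately: sufficiency by direct substitution and necessity by comparing first-order conditions at an interior welfare-maximizing experiment.

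For sufficiency, observe that under $b=\alpha y-\beta$ the agent's objective becomes $\alpha E_p[y]-E_\pi[\beta]-c(p)$, so the state-dependent transfer contributes only a constant and the agent solves $\max_{c(p)\leq k}\alpha E_p[y]-c(p)$. At $\alpha=1$ this is exactly the welfare problem, so every optimum is welfare-maximizing. For $\alpha>\alpha'$, by the definition of $\alpha'$ there is an optimal $p$ with $c(p)=k$; a standard convexity argument then forces every optimum onto the face $\{p:c(p)=k\}$, reducing the agent's problem to $\max\{E_p[y]:c(p)=k\}$, which coincides with the welfare problem whenever the welfare-optimal experiment satisfies $c(p^{FB})=k$---and $\alpha'<1$ is exactly the case where it does.

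For necessity, let $b$ be first-best with interior welfare-maximizing experiment $p^{FB}$. Form the Lagrangians for (i) the agent's problem under $b$ and (ii) the welfare problem $\max_{c(p)\leq k}E_p[y]-c(p)$, with capacity multipliers $\mu,\mu^{FB}\geq 0$ and multipliers $\nu(\theta),\nu^{FB}(\theta)$ on the constraints $\sum_d p(d|\theta)=1$. Interiority produces gradient conditions with equality,
\begin{equation*}
\pi(\theta)\,b(d,\theta)=(1+\mu)\,\frac{\partial c(p^{FB})}{\partial p(d|\theta)}+\nu(\theta),\qquad \pi(\theta)\,y(d,\theta)=(1+\mu^{FB})\,\frac{\partial c(p^{FB})}{\partial p(d|\theta)}+\nu^{FB}(\theta).
\end{equation*}
Solving the second equation for the partial derivative of $c$ at $p^{FB}$ and substituting into the first yields $b(d,\theta)=\alpha\,y(d,\theta)-\beta(\theta)$ with $\alpha=(1+\mu)/(1+\mu^{FB})$ and $\beta(\theta)=(\alpha\,\nu^{FB}(\theta)-\nu(\theta))/\pi(\theta)$.

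To conclude $\alpha\geq\alpha'$, I would apply the same FOC calculation to the agent's problem under contract $\alpha y$: if the agent still optimally chooses $p^{FB}$, his capacity multiplier $\mu_\alpha$ satisfies $\alpha(1+\mu^{FB})=1+\mu_\alpha$ and crosses zero precisely at $\alpha=1/(1+\mu^{FB})$, identifying $\alpha'=1/(1+\mu^{FB})$. Nonnegativity of $\mu$ from KKT then delivers $\alpha\geq\alpha'$. The main obstacle I anticipate is the rigorous identification $\alpha'=1/(1+\mu^{FB})$, together with the companion sufficiency claim that all optima sit on the capacity boundary when $\alpha\geq\alpha'$; both require ruling out discontinuous jumps in the agent's optimal experiment as $\alpha$ varies, which is where the interiority and differentiability hypotheses pull their weight.
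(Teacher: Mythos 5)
Your necessity argument is essentially the paper's: both compare first-order conditions at the common interior optimal experiment under contract $b$ and under contract $y$ (the welfare problem) and solve the two gradient equations to obtain $b=\alpha y-\beta$ with $\alpha=(1+\mu)/(1+\mu^{FB})$. But your route to $\alpha\geq\alpha'$ --- identifying $\alpha'=1/(1+\mu^{FB})$ by tracking the capacity multiplier as the scale varies --- is, as you yourself flag, the weak link: it requires that the agent's optimal experiment move continuously in $\alpha$ and that the capacity constraint switch from slack to binding exactly once, neither of which follows from the stated hypotheses. The paper sidesteps the identification entirely with a short contradiction: if the coefficient $(1+\mu)/(1+\mu^{FB})$ were strictly less than $\alpha'$, then by the definition of $\alpha'$ the experiment $p^{FB}$, being optimal against that scaled contract, would satisfy $c(p^{FB})<k$; but then both capacity multipliers vanish, the coefficient equals $1\geq\alpha'$, and we have a contradiction. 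You should replace your identification step with this argument.

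The sufficiency direction has a more substantive gap. You invoke ``a standard convexity argument'' to push every agent optimum onto the face $\{p: c(p)=k\}$, but convexity of $c$ is not a hypothesis of this part of the proposition (at this stage the cost function is only assumed to respect the Blackwell order), and even granting convexity the set of optima need not lie entirely on that face. Furthermore, reducing both the agent's problem under $\alpha y-\beta$ and the welfare problem to the face only shows they agree when restricted to the face, not that their unrestricted optima coincide. The paper's argument is both more elementary and assumption-free: writing $p$ for an optimum against $y$ and $p_\alpha$ for an optimum against $\alpha y-\beta$, the two revealed-preference inequalities $E_p[y]-c(p)\geq E_{p_\alpha}[y]-c(p_\alpha)$ and $E_{p_\alpha}[\alpha y-\beta]-c(p_\alpha)\geq E_p[\alpha y-\beta]-c(p)$ sandwich $c(p)-c(p_\alpha)$ between $E_p[y]-E_{p_\alpha}[y]$ and $\alpha\left(E_p[y]-E_{p_\alpha}[y]\right)$; choosing $p_\alpha$ with $c(p_\alpha)=k$ (available for $\alpha\geq\alpha'$ by the definition of $\alpha'$) collapses the sandwich to equalities, giving $E_p[y]=E_{p_\alpha}[y]$ and $c(p)=c(p_\alpha)=k$, so $p_\alpha$ attains first-best welfare. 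I would rebuild your sufficiency step around that sandwich.
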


\begin{proof}

Let $p$ be an experiment that is optimal for the agent against contract $y$. First, consider a contract  $\alpha y - \beta$ for $\alpha \geq \alpha'$ and $\beta:\Theta \to \BR$ and let $p_\alpha$ be optimal for this contract. We will show that
\begin{equation}\label{ineq3}
E_{p}[y] - E_{p_\alpha}[y] \geq c(p) - c(p_\alpha) \geq \alpha(E_{p}[y] - E_{p_\alpha}[y]).
\end{equation}
Then, if $\alpha \geq \alpha'$, we may assume $c(p_\alpha) = k$, so that $c(p)=k$ and $E_p[y] = E_{p_\alpha}[y]$. This implies that $\alpha y - \beta$ is first-best.

Inequality (\ref{ineq3}) may be proved by noting that $E_{p}[y] - c(p) \geq  E_{p_\alpha}[y] - c(p_\alpha)$ and $E_{p_\alpha}[\alpha y - \beta] - c(p_\alpha) \geq E_{p}[\alpha y - \beta] - c(p)$. Rearranging these inequalities gives the first and second inequalities of (\ref{ineq3}).

Conversely, let $b$ be a first-best contract and $p'$ an experiment that is optimal for this contract. Then the gradient of the agent's objective function at $p'$ under contract $y$ and contract $b$ belong to the normal cone $N_{\mathscr{E}}(p')$ (see Theorem~\ref{wetrock6.12}). 

Since the cost function is differentiable, the set of feasible experiments is regular at $p'$. Hence,
\[N_{\mathscr{E}}(p') = \left\{\mu \nabla c (p') + \rho : \mu \geq 0 \text{ and } \rho: \Theta \to \BR \right\}.\]
Therefore, there are $\mu,\mu' \geq 0$ and $\rho,\rho' : \Theta \to \BR$ such that
\[\pi y - \nabla c(p') = \mu \nabla c(p') + \rho \text{ and }\pi b - \nabla c(p') = \mu' \nabla c(p') + \rho'.\]
Combining these equations gives
\[b = \frac{1+\mu'}{1+\mu} y - \frac{(\mu'+1)\rho - (\mu+1)\rho'}{\pi(\mu+1)}.\]
To complete the argument we need to show that the coefficient on $b$ is at least equal to $\alpha'$. Suppose, on the contrary, that this coefficient is less than $\alpha'$. Then, from the definition of $\alpha'$, $c(p')<k$. But this implies $\mu=\mu'=0$ so that $(1+\mu')/(1+\mu)=1$, and this is impossible because, by its definition, $\alpha' \leq 1$.

\end{proof}

Note that if a first-best contract is feasible, it is also Pareto optimal. Notice too that if there is a first-best Pareto optimal contract yielding the agent utility $r$, then all Pareto optimal contracts yielding the agent utility at least $r$ are first-best since the agent's utility can be increased without affecting his behavior. 

From here, it is easy to see that the minimum agent utility achievable with a first-best Pareto optimal contract is obtained by the contract $\alpha'y - \beta$, where $\beta:\Theta \to \BR$ brings the minimum contract payment in each state to zero: $\beta(\theta) = \min\{\alpha 'y(d,\theta): d\in D\}$.
%

Also note that $\alpha'$ is increasing in the agent's capacity $k$.

\paragraph{Dealing with the capacity constraint}\label{CapacityConstraint}

In this section we show that imposing a capacity constraint is equivalent to scaling output. This will allow us to proceed in solving for the second-best Pareto optimal contracts with a scaled level of output in place of the capacity constraint.

The key idea is to consider a perturbation of our model, parametrized by $\alpha \in [0,1]$, in which output $y$ in the principal's objective function is replaced by $\alpha y$ (the feasible contracts, distributions, and profiles remain unchanged). Denote the set of Pareto optimal profiles to the perturbed problem by $\mathscr{P}(\alpha)$, and denote by $\mathscr{P}(\alpha,r)$ the subset of $\mathscr{P}(\alpha)$ in which the agent's utility is minimized subject to it being at least equal to $r$.

Consider a Pareto optimal profile $(b,p) \in \mathscr{P}(1,r)$ to the unperturbed problem. The following Proposition shows that all solutions $\mathscr{P}(1,r)$ of this unperturbed problem (a problem in which the capacity constraint might bind) may be obtained as solutions $\mathscr{P}(\alpha^*,E_{p}[b]-c(p))$ to the perturbed problem for the $\alpha^* \in [0,1]$ defined in the proposition and that, if the agent's cost function is continuous, then for at least one solution of this problem the capacity constraint does not bind.

\begin{prop}\label{capacitythm}
Let $(b,p) \in \mathscr{P}(1,r)$. Then there exists $\alpha^* \in [0,1]$, defined as
\[\alpha^* = \sup \left\{\alpha \in [0,1] : c(p')<k \text{ for all } (b',p') \in \mathscr{P}(\alpha, E_p[b]-c(p))\right\},\]
so that for each $\alpha \in [\alpha^*,1]$,
\[\mathscr{P}(\alpha,E_p[b]-c(p)) \supseteq \mathscr{P}(1,r),\]
and, conversely, if $(b_\alpha,p_\alpha)\in \mathscr{P}(\alpha,E_{p}[b]-c(p))$ is such that $c(p_\alpha) = k$, then
\[(b_\alpha,p_\alpha) \in \mathscr{P}(1,r).\]
%

If the agent's cost function is continuous, then there is a Pareto optimal profile in $\mathscr{P}(\alpha^*,E_p[b]-c(p))$ that solves the problem obtained by removing the capacity constraint from this perturbed problem.

\end{prop}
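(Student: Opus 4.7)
The approach is to exploit a Lagrangian-style correspondence between the capacity-constrained original problem and the $\alpha$-perturbed family. Since feasibility and the agent's incentive problem are unchanged by $\alpha$, only welfare differs: perturbed welfare $\alpha E_p[y]-c(p)$ equals original welfare minus $(1-\alpha)E_p[y]$. Introducing a multiplier $\mu\ge 0$ on the capacity constraint turns the original welfare into the Lagrangian $E_p[y]-(1+\mu)c(p)+\mu k$, which up to a positive rescaling by $1+\mu$ and an additive constant is the perturbed welfare at $\alpha=1/(1+\mu)$. This lets us read off $\alpha^* = 1/(1+\mu^*)$, where $\mu^*$ is the capacity multiplier at the original Pareto optimum; intuitively, for $\alpha\ge\alpha^*$ capacity binds in every perturbed Pareto optimum, and below $\alpha^*$ it is slack.

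For the forward inclusion $\mathscr{P}(1,r)\subseteq\mathscr{P}(\alpha,\bar u)$ with $\bar u=E_p[b]-c(p)$, pick $(\tilde b,\tilde p)\in\mathscr{P}(1,r)$. If $c(\tilde p)<k$ we are in the first-best regime of Proposition~\ref{prop2}, $\alpha^*=1$, and the claim is vacuous, so assume $c(\tilde p)=k$. By risk-neutrality of the parties together with the availability of state-dependent transfers as first-best instruments, Pareto optimality at agent utility $\bar u$ reduces to welfare maximization at that agent utility. Restricted to the capacity boundary $\{p:c(p)=k\}$, original welfare $E_p[y]-k$ and (for $\alpha>0$) perturbed welfare $\alpha E_p[y]-k$ are positive affine transformations of $E_p[y]$, hence share their maximizers (subject to IC, feasibility, and the agent-utility bound). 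By definition of $\alpha^*$, every perturbed Pareto optimum at $\alpha\in[\alpha^*,1]$ lies on this boundary, so $(\tilde b,\tilde p)\in\mathscr{P}(\alpha,\bar u)$.

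For the converse, suppose $(b_\alpha,p_\alpha)\in\mathscr{P}(\alpha,\bar u)$ with $c(p_\alpha)=k$, and assume for contradiction that some feasible IC profile $(b',p')$ with agent utility at least $\bar u$ beats $(b_\alpha,p_\alpha)$ in original welfare. Feasibility gives $c(p')\le k=c(p_\alpha)$. Writing $\Delta=E_{p'}[y]-E_{p_\alpha}[y]$, the original welfare gain is $\Delta+(k-c(p'))>0$, while the perturbed gain is $\alpha\Delta+(k-c(p'))$, the two differing by $(\alpha-1)\Delta$. A short case analysis on the sign of $\Delta$ shows the perturbed gain is also strictly positive: when $\Delta\le 0$ the perturbed gain weakly exceeds the original; when $\Delta>0$ both terms of the perturbed gain are nonnegative with at least one strict (since the original gain is strict). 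This yields a perturbed Pareto improvement, contradicting the Pareto optimality of $(b_\alpha,p_\alpha)$ in the perturbed problem, so $(b_\alpha,p_\alpha)\in\mathscr{P}(1,r)$.

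For the continuity claim, choose a sequence $\alpha_n\downarrow\alpha^*$: by definition of $\alpha^*$ as a supremum, each $\mathscr{P}(\alpha_n,\bar u)$ contains a profile $(b_n,p_n)$ with $c(p_n)=k$. Continuity of $c$ and compactness of the relevant spaces, combined with upper semi-continuity of the Pareto-optimum correspondence in $\alpha$ (via Berge's maximum theorem), yield a convergent subsequence with limit $(b^*,p^*)\in\mathscr{P}(\alpha^*,\bar u)$ satisfying $c(p^*)=k$; since $c(p^*)=k$ already saturates the capacity constraint, removing that constraint from the perturbed problem at $\alpha^*$ preserves $(b^*,p^*)$'s optimality (any capacity-unconstrained improvement would, by continuity, also improve at slightly larger $\alpha$, contradicting the characterization of $\alpha^*$). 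The main obstacle throughout is formalizing the Pareto-to-welfare reduction when the liability limits on $b$ may restrict admissible state-dependent transfers, which requires checking that sufficient slack in contract payments exists to equate agent utilities without violating feasibility.
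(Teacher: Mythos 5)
Your proposal rests on a reduction that it never establishes and that the paper deliberately avoids: the claim that Pareto optimality at a given agent utility ``reduces to welfare maximization at that agent utility.'' Under the liability limits $0\leq b\leq y$ this reduction is exactly what fails in general (it is the reason second-best contracts exist at all), and you flag it yourself in your last sentence as ``the main obstacle'' without resolving it. Both directions of your argument lean on it: the forward inclusion compares the two problems' \emph{welfare} on the capacity boundary, and the converse derives a \emph{welfare} improvement in the perturbed problem and treats that as a contradiction of Pareto optimality. But a welfare improvement is neither necessary nor sufficient for a Pareto improvement when transfers between the parties are constrained, so the contradiction does not follow. The paper's proof never passes through welfare: it uses the principal's revealed preference in each problem ($E_{p}[y-b]\geq E_{p_\alpha}[y-b_\alpha]$ and $E_{p_\alpha}[\alpha y-b_\alpha]\geq E_{p}[\alpha y-b]$) together with the agent's participation constraint ($E_{p_\alpha}[b_\alpha]-c(p_\alpha)\geq E_p[b]-c(p)$) to sandwich $c(p)-c(p_\alpha)$ and $E_p[b]-E_{p_\alpha}[b_\alpha]$, and then the binding capacity constraint $c(p_\alpha)=k$ forces all the inequalities to equalities, giving both inclusions at once. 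You also misread the definition of $\alpha^*$: for $\alpha>\alpha^*$ it guarantees only that \emph{some} perturbed Pareto optimum sits on the capacity boundary, not that every one does, so your statement ``every perturbed Pareto optimum at $\alpha\in[\alpha^*,1]$ lies on this boundary'' is unsupported.

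The continuity step is also taken from the wrong side. You approach $\alpha^*$ from above with profiles satisfying $c(p_n)=k$ and then need the vague parenthetical about improvements propagating to ``slightly larger $\alpha$.'' The paper instead takes $\alpha_n\uparrow\alpha^*$ from below, where the definition of $\alpha^*$ gives $c(p_n)<k$ along the whole sequence; the capacity constraint is therefore slack at every $(b_n,p_n)$, so each $p_n$ is optimal for the agent \emph{without} the constraint, and this property (together with membership in the Pareto set) survives passage to the limit by continuity and compactness. That is the mechanism that delivers a profile in $\mathscr{P}(\alpha^*,E_p[b]-c(p))$ solving the unconstrained problem; a boundary sequence with $c(p_n)=k$ does not deliver it.
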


Note that $\alpha^*$, by its definition, is increasing in the agent's capacity.


\begin{proof}

By assumption $(b,p)\in \mathscr{P}(1,r)$. Let $\alpha \in [0,1)$ (the argument for $\alpha =1$ is trivial) and $(b_\alpha,p_\alpha) \in \mathscr{P}(\alpha,E_p[b]-c(p))$. We will show:
\begin{equation}\label{ineq1}
E_{p}[y] - E_{p_\alpha}[y] \geq E_{p}[b] - E_{p_\alpha}[b_\alpha] \geq \alpha(E_{p}[y] - E_{p_\alpha}[y]) \geq 0,
\end{equation}
\begin{equation}\label{ineq2}
c(p) - c(p_\alpha) \geq  E_{p}[b] - E_{p_\alpha}[b_\alpha].
\end{equation}
Then, if $\alpha \geq \alpha^*$ we may assume that $c(p_\alpha) = k$, and the inequalities in (\ref{ineq2}) give $c(p) = k$ and $E_p[b] = E_{p_\alpha}[b_\alpha]$, and so the inequalities in (\ref{ineq1}) give $E_p[y] = E_{p_\alpha}[y]$. Therefore $(b,p)\in \mathscr{P}(\alpha,E_p[b]-c(p))$ (so that $\mathscr{P}(\alpha,E_p[b]-c(p)) \supseteq \mathscr{P}(1,r)$) and $(b_\alpha,p_\alpha) \in \mathscr{P}(1,r)$.

To obtain the inequalities in (\ref{ineq1}) note that $E_p[y-b] \geq E_{p_\alpha}[y-b_\alpha]$ and $E_{p_\alpha}[\alpha y-b_\alpha] \geq E_p[\alpha y-b]$. The first and second inequalities of (\ref{ineq1}) are rearrangements of these; and from this the last inequality in (\ref{ineq1}) follows since $\alpha \in [0,1)$. Inequality (\ref{ineq2}) follows from the participation constraint $E_{p_\alpha}[b_\alpha] - c(p_\alpha) \geq E_{p}[b] - c(p)$. 

What remains to be shown is that there is a Pareto optimal profile in $\mathscr{P}(\alpha^*,E_p[b]-c(p))$ that solves the problem obtained by removing the capacity constraint from the perturbed problem for $\alpha^*$. Let $\alpha_1,\alpha_2,\cdots$ be a sequence in $[0,\alpha^*)$ converging to $\alpha^*$ and let $(b_{1},p_{1}), (b_{2},p_{2}),\cdots$ be a sequence of profiles with $(b_{i},p_{i}) \in \mathscr{P}(\alpha_{i},E_p[b]-c(p))$, for $i=1,2,\cdots$. Since the agent's cost function is continuous, the set of feasible profiles yielding the agent at least utility $E_p[b]-c(p)$ is compact. Therefore, there is a sub-sequence $(b_{n_1},p_{n_1}), (b_{n_2},p_{n_2}),\cdots$ converging to some feasible $(b',p')$. Continuity of the principal's objective function implies $(b',p') \in \mathscr{P}(\alpha^*,E_p[b]-c(p))$. Continuity of the agent's cost function together with the fact that $c(p_i)<k$ for $i=1,2,\cdots$ (because $\alpha_i<\alpha^*$) implies that $(b',p')$ solves the problem obtained by removing the capacity constraint from the perturbed problem for $\alpha^*$.
%
%
\end{proof}

\paragraph{Optimal distortion}

Having dealt with first-best Pareto optimal contracts (and developed a tool to jettison the capacity constraint) we can now turn to their second-best counterparts. Proposition~\ref{capacitythm} tells us that we can do away with the capacity constraint and instead replace output by $\alpha^*y$. In the next proposition we solve for the form of the optimal contract in the resulting problem.


\begin{prop}\label{prop3}
Let $(b,p)$ belong to $\mathscr{P}(\alpha,r)$ and assume that 
\begin{itemize}
\item the agent's capacity constraint does not bind: $p$ is an optimal experiment for the agent without a capacity constraint when the contract is $b$;
\item the agent's cost function $c$ is strictly convex and its second derivative is continuous;
\item $p$ assigns positive probability to each decision in each state.
\end{itemize}
Then 
\[b(d,\theta)=\alpha y(d,\theta)-\beta(\theta)-\gamma(d,\theta),\]
\[\gamma(d,\theta) = \frac{1}{\pi(\theta)}\sum_{ d',\theta' }\frac{\partial^2 c(p)}{\partial p(d|\theta)\partial p(d'|\theta')}\left(p(d'|\theta')(1-\xi) - \frac{\lambda[d',\theta']}{\pi(\theta')}\right).\addtag \label{gammaexp}\] 
The term $\lambda[d,\theta]$ is a Lagrange multiplier for the constraint $0\leq b(d,\theta)\leq y(d,\theta)$: it is non-negative when only the agent's liability limit binds, non-positive when only the principal's liability limit binds, zero when neither liability limit binds, and unrestricted in its value if both liability limits bind; $\xi\in [0,1]$ and is a Lagrange multiplier on the agent's participation constraint.
\end{prop}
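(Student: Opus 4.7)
The plan is to set up the principal's problem as a Lagrangian in $(b,p)$ with incentive compatibility imposed via the agent's interior first-order condition, and to read the stated decomposition off the Karush--Kuhn--Tucker conditions. Because $p$ has full support, $c$ is differentiable, and $c$ is strictly convex, the agent's best response is uniquely characterized by
\[\pi(\theta) b(d,\theta) - \frac{\partial c(p)}{\partial p(d|\theta)} = \nu(\theta),\]
where $\nu(\theta)$ is the agent's multiplier on $\sum_d p(d|\theta)=1$, and becomes a free auxiliary variable for the principal. The principal's program is then to maximize $E_p[\alpha y - b]$ over $(b,p,\nu)$ subject to this IC equation, the two-sided limited-liability constraints $0\le b(d,\theta)\le y(d,\theta)$, the participation constraint $E_p[b]-c(p)\ge r$, and $\sum_d p(d|\theta)=1$. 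Attach free multipliers $\mu(d,\theta)$ to IC, nonnegative $\lambda^{-}[d,\theta]$ and $\lambda^{+}[d,\theta]$ to the lower and upper liability bounds, $\xi\ge 0$ to PC, and $\eta(\theta)$ to the decision simplex, and write $\lambda[d,\theta]:=\lambda^{-}[d,\theta]-\lambda^{+}[d,\theta]$.

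The first-order condition in $b(d,\theta)$ is linear and yields the bookkeeping identity
\[\mu(d,\theta) = p(d|\theta)(1-\xi) - \frac{\lambda[d,\theta]}{\pi(\theta)}.\]
Differentiating in $p(d|\theta)$ brings in the Hessian of $c$ through the IC constraint; substituting IC itself to replace $\pi(\theta)b$ by $\partial c/\partial p(d|\theta)+\nu(\theta)$ everywhere and dividing by $\pi(\theta)$ reduces the resulting equation to
\[b(d,\theta) = \alpha y(d,\theta) - \beta(\theta) - \frac{1}{\pi(\theta)}\sum_{d',\theta'}\mu(d',\theta')\frac{\partial^{2} c(p)}{\partial p(d|\theta)\,\partial p(d'|\theta')},\]
with $\beta(\theta):=-(\xi\nu(\theta)+\eta(\theta))/\pi(\theta)$ absorbing the purely state-dependent residue. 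Inserting the formula for $\mu$ gives the stated $\gamma$. The sign discipline on $\lambda$ follows from complementary slackness ($\lambda^{-}$ and $\lambda^{+}$ are nonnegative, and at most one is positive unless both bounds bind); that $\xi\in[0,1]$ follows from $\xi\ge 0$ (KKT for a $\ge$ inequality) together with the fact that raising the agent's required utility by one unit can be accommodated by the principal by transferring one unit through $\beta$, so the shadow price of PC cannot exceed one.

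The main obstacle I expect is constraint qualification, i.e.\ guaranteeing that KKT multipliers exist at the optimum so that the FOC manipulations above are legitimate. Strict convexity of $c$ with continuous Hessian makes the IC equations define a smooth manifold in $(b,p,\nu)$ near the optimum; the LL constraints are linear in $b$; PC is convex in $(b,p)$; and the simplex constraint is linear. With the assumed interiority of $p$, a Mangasarian--Fromovitz-type condition can be checked to license the Lagrangian calculus. A secondary technical point, already used above, is that the interior agent FOC is genuinely sufficient (not merely necessary) for the IC requirement, which again follows from the strict convexity hypothesis on $c$.
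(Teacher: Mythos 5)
Your setup is essentially the paper's: impose the agent's interior first-order condition $\pi(\theta)b(d,\theta)-\partial c(p)/\partial p(d|\theta)=\nu(\theta)$ as an equality constraint (legitimate as a full characterization of incentive compatibility because the agent's problem is convex and $p$ has full support), attach multipliers to it, to the two-sided liability limits, to the participation constraint, and to the simplex constraints, and read the decomposition off the stationarity conditions. Your bookkeeping identity $\mu(d,\theta)=p(d|\theta)(1-\xi)-\lambda[d,\theta]/\pi(\theta)$ is exactly the paper's multiplier $\phi[d,\theta]$, and the substitution of the IC equation into the $p$-stationarity condition to produce $b=\alpha y-\beta-\gamma$ matches the paper's computation line for line (up to a sign convention in $\beta$). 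The derivation of $\xi\in[0,1]$ in the paper is cleaner than your heuristic: summing the $b$-stationarity condition over $d$ and using the $\nu$-stationarity condition $\sum_d\mu(d,\theta)=0$ gives $(1-\xi)\pi(\theta)=\sum_d\lambda[d,\theta]$, which with $\xi\ge 0$ and the sign restrictions on $\lambda$ pins down the interval.

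The genuine gap is precisely the point you flag and then defer: the constraint qualification. You assert that a Mangasarian--Fromovitz-type condition ``can be checked,'' but this verification is what the paper identifies as the main challenge of the proof, and it is not automatic from smoothness, linearity of the liability constraints, and interiority of $p$. The paper proves it by showing that the gradients of all constraint functions other than the participation constraint are linearly independent: after eliminating the participation row, the Jacobian has a block-triangular form whose relevant block is
\[
\begin{bmatrix}\tilde B & 0\\ \tilde A & -\tilde B'\end{bmatrix},
\qquad \tilde A=\nabla^2 c(p),\quad \tilde B=[\,I_{|\Theta|}\ \cdots\ I_{|\Theta|}\,],
\]
and invertibility is established via block inversion using positive definiteness of $\tilde A$ (this is where strict convexity of $c$ enters essentially, not merely to guarantee uniqueness of the agent's best response) and positive definiteness of $\tilde B\tilde A^{-1}\tilde B'$. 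One then needs the additional observation that at least one liability limit is slack, so that the participation-constraint gradient, whose first $|D\times\Theta|$ entries $\pi(\theta)p(d|\theta)$ are nonzero by the full-support assumptions, can replace the corresponding slack liability gradient while preserving linear independence. Without carrying out some version of this argument, the existence of the multipliers $(\lambda,\xi,\eta,\mu)$ on which your entire computation rests is unestablished, so you should either reproduce this linear-independence verification or supply an equivalent one before the stationarity conditions can be invoked.
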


By Proposition~\ref{capacitythm}, when the agent is capacity constrained there is a Pareto optimal contract of the form 
\[
b(d,\theta) = \alpha^* y(d,\theta) - \beta(\theta) - \gamma(d,\theta),
\]
where $\alpha^*\in [0,1]$ and is increasing in the agent's capacity. For second best contracts we have that 
\[
\beta(\theta)= \min\{\alpha^* y(d,\theta)-\gamma(d,\theta): d\in D\}.
\]

Note that if the mapping from $\alpha \in [0,1]$ to the set of Pareto optimal profiles $\mathscr{P}(\alpha,r)$ is lower hemi-continuous at $\alpha^*$, then all second-best Pareto optimal contracts take this form (cf. Proposition~\ref{capacitythm}).

\begin{proof}

\begin{center}
\emph{Step 1: necessary and sufficient conditions for optimal experiments}
\end{center}

The agent's problem is convex and so a necessary and sufficient condition for an experiment $p$ to be optimal against contract $b$ is that $\nabla_p \left(E_{p}[b]-c(p)\right)$ belong to the normal cone $N_\mathscr{E}(p)$ (see Theorem~\ref{wetrock6.12}):
\[
\pi(\theta)b(d,\theta) - \frac{\partial c(p)}{\partial p(d|\theta)} = \rho[\theta] \text{ for all }d\in D, \theta\in \Theta \addtag \label{agentcondition}
\]
for some $\rho \in \BR^{|\Theta|}$. Here we have used that the set of feasible experiments is defined without a capacity constraint and that the experiment $p$, by assumption, places positive probability on each decision in each state.

\begin{center}
\emph{Step 2: a necessary condition for optimal contracts} 
\end{center}

Condition~\ref{agentcondition} involves Lagrange multipliers $\rho$ so we shall pose the problem of designing an optimal contract as a choice of a contract $b$, experiment $p$, and Lagrange multipliers $\rho$ such that the contract and experiment form a feasible profile that yields the agent at least utility $r$. In the notation of Theorem~\ref{wetrock6.14} define $C = \{(b,p,\rho)\in A : F(b,p,\rho) \in B\},$ where $A = \BR^{|D\times\Theta|+|D\times \Theta|+|\Theta|}$, $F:\BR^{|D\times\Theta|+|D\times \Theta|+|\Theta|} \to \BR^{|D\times\Theta|+1+|\Theta|+|D\times \Theta|}$ is defined componentwise $F(b,p,\rho) = (f^i_{j}(b,p,\rho))_{i,j}$ by
\[
f^i_{j}(b,p,\rho) =\begin{cases}
b(d,\theta) & \text{ for } i=1, j\in D\times \Theta,\\
r- (E_p[b]-c(p)) & \text{ for } i=2,\\ 
\sum_{d\in D}p(d|\theta) - 1 & \text{ for } i=3,j\in \Theta,\\ 
\pi(\theta)b(d,\theta) - \frac{\partial c(p)}{\partial p(d|\theta)} - \rho[\theta]  & \text{ for } i=4, j \in D\times \Theta,
\end{cases}
\]
and $B=\left(\prod_{(d,\theta)\in D\times \Theta} [0,y(d,\theta)]\right) \times (-\infty,0] \times \{0\}^{|\Theta|+|D\times \Theta|}$.

The Pareto optimal contracts in which the principal maximizes his utility subject to the agent receiving at least utility $r$ is given by maximizing $E_p[y-b]$ over $C$. A necessary condition for the local optimality of $(b,p,\rho)$ is that $\nabla_{(b,p,\rho)} E_p[\alpha^* y-b]$ belongs to the normal cone $\hat N_C(b,p,\rho)$

The main challenge of the proof is in showing that $C$ is regular at any solution $(b,p,\rho)$ so that we may then express its normal cone as certain linear combinations of the gradients of the constraint functions.

We follow the notation and conditions of Theorem~\ref{wetrock6.14}. Thus, first note that by Theorem~\ref{wetrock6.15}, $A$ and $B$ are regular since they are products of intervals. Also, $N_A(b,p,\rho)=\{0\}$ and $N_B(b,p,\rho)$ is the standard multiplier cone consisting of Lagrange multipliers $(\lambda,\xi,\tau,\phi) \in \BR^{|D\times \Theta|}\times \BR \times \BR^{|\Theta|} \times \BR^{|D\times \Theta|}$ with each multiplier being zero if an interior point, non-positive if a right endpoint only, non-negative if a left end-point only, and unrestricted if both right and left end-points of the corresponding interval of $B$.

We now show that the constraint qualification of Theorem~\ref{wetrock6.14} holds at $(b,p,\rho)$. To do this consider the matrix obtained from the Jacobian of $F$ after eliminating the row corresponding to the gradient of the participation constraint function ($i=2$). This matrix has the form
\[
\begin{bmatrix}
I & 0 \\
\tilde C & \tilde D
\end{bmatrix}
\]
where $\tilde D$ is a square matrix and $I$ is an identity matrix. Therefore, its determinant, therefore, is given by the product of the determinant of the identity matrix and the determinant of $\tilde D$. The matrix $\tilde D$ has the form
\[
\begin{bmatrix}
\tilde B & 0 \\

\tilde A & -\tilde B'
\end{bmatrix}.
\]
where $\tilde A = \nabla^2 c(p)$ and $\tilde B = [I_{|\Theta|\times|\Theta|} \dots I_{|\Theta|\times|\Theta|}$]. It is enough to show that the matrix obtained from $\tilde D$ by interchanging the first block and the second block of rows is invertible. Block-wise inversion of this matrix gives
\[
 \begin{bmatrix} (\tilde B\tilde A^{-1}\tilde B')^{-1} & -(\tilde B\tilde A^{-1}\tilde B')^{-1}\tilde B\tilde A^{-1} \\ -\tilde A^{-1}(-\tilde B')(\tilde B\tilde A^{-1}\tilde B')^{-1} & \quad \tilde A^{-1}+\tilde A^{-1}(-\tilde B')(\tilde B\tilde A^{-1}\tilde B')^{-1}\tilde B\tilde A^{-1}\end{bmatrix}.
\]
Hence it is invertible if $\tilde A$ and $\tilde B\tilde A^{-1}\tilde B'$ are invertible. $\tilde A$ is invertible because it is positive definite as the Hessian of a strictly convex function. To see that $\tilde B\tilde A^{-1}\tilde B'$ is invertible, let $x$ be any nonzero vector conformable for pre-multiplication with $\tilde B$. Then, since $\tilde A^{-1}$ is positive definite, 
\[
0< x' \tilde B \tilde A^{-1} \tilde B'x,
\]
which implies that $\tilde B\tilde A^{-1}\tilde B'$ is positive definite and thus invertible.
We can conclude from this that the set of gradients of the functions $f^i_j$ ($i\neq 2$) at $(b,p,\rho)$ form a linearly independent set of vectors. 

At least one of the liability limits must not bind, and so we can replace the corresponding gradient with the gradient of $f^2_j$. Since $p$ is assumed to place positive probability on each decision in each state and the prior has full support, the first $|D\times \Theta|$ components of this gradient are nonzero. Thus the new set of vectors form a linearly independent set. Therefore the standard constraint qualification of Theorem~\ref{wetrock6.14} holds at $(b,p,\rho)$ and hence $C$ is regular at $(b,p,\rho)$ (so that $\hat N_C(b,p,\rho)=N_C(b,p,\rho))$ and hence
\begin{equation*} 
     \begin{split}
     N_C(b,p,\rho) =
     \left\{\sum_{d,\theta}\lambda[d,\theta] \nabla f^1_{(d,\theta)}(b,p,\rho) + \xi \nabla f^2(b,p,\rho) + \sum_{\theta }\tau[\theta] \nabla f^3_{\theta}(b,p,\rho) \right. \\ \left.  + \sum_{d,\theta}\phi[d,\theta] \nabla f^4_{(d,\theta)}(b,p,\rho): (\lambda,\xi,\tau,\phi) \in N_B(F(b,p,\rho))\right\}.
    \end{split}
\end{equation*}
The condition $\nabla_{(b,p,\rho)} E_p[y-b]  \in \hat N_C(b,p,\rho)$ may now be written
\[
\pi(\theta) p(d|\theta) = \lambda[d,\theta] + \xi \pi(\theta) p(d|\theta) + \phi[d,\theta] \pi(\theta),
\]
\[
\pi(\theta)(y(d,\theta)-b(d,\theta)) = \tau[\theta]
\]
\[
 +  \sum_{d',\theta'}\phi[d',\theta']\frac{\partial^2 c(p)}{\partial p(d|\theta)\partial p(d'|\theta')} + \xi \left(\pi(\theta)b(d,\theta) - \frac{\partial c(p)}{\partial p(d|\theta)}\right),
\]
\[
0 = -\sum_{d'}\phi[d',\theta].
\]
Therefore
%
\[
b(d,\theta)=y(d,\theta)-\frac{\tau[\theta]+\xi\rho[\theta]}{\pi(\theta)}-\frac{1}{\pi(\theta)}\sum_{d',\theta'}\phi[d',\theta']\frac{\partial^2 c(p)}{\partial p(d|\theta)\partial p(d'|\theta')}. \addtag \label{optimalcontractform}
\]
where $\phi[d,\theta] = p(d|\theta)(1-\xi)-\lambda[d,\theta]/\pi(\theta)$. Defining $\beta:\Theta\to \BR$ and $\gamma:D\times \Theta \to \BR$ as 
\[
\beta(\theta) = \frac{\tau[\theta]+\xi\rho[\theta]}{\pi(\theta)},
\]
\[
\gamma(d,\theta) = \frac{1}{\pi(\theta)}\sum_{ d',\theta' }\left(p(d'|\theta')(1-\xi) - \frac{\lambda[d',\theta']}{\pi(\theta')}\right)\frac{\partial^2 c(p)}{\partial p(d|\theta)\partial p(d'|\theta')},
\]
%
%
%
completes the proof.
\end{proof}

Together, propositions~\ref{prop2},~\ref{capacitythm}, and~\ref{prop3} provide a general characterization of Pareto optimal contracts in terms of a piece rate $\alpha\in [0,1]$, a state-dependent transfer $\beta:\Theta \to \BR$, and an optimal distortion $\gamma:D\times \Theta \to \BR$.

For a first-best contract, $\gamma=0$ and there is no distortion to incentives. The piece rate $\alpha$ may be any value in the interval $[\alpha',1]$, $\alpha'$ being the point below which the agent's capacity is no longer exhausted. The state-dependent transfer $\beta$ is any state dependent transfer such that the resulting contract $\alpha y - \beta$ satisfies the liability limits and so is feasible.

For a second-best contract the piece rate is equal to $\alpha^*$ and the state-dependent transfer $\beta$ is given by $\beta(\theta) = \min\{\alpha^* y(d,\theta)-\gamma(d,\theta): d\in D\}$ so that liability limits bind in each state. The non-distortionary ways of transferring utility from the agent to the principal are fully exploited. 

The term $\gamma$ describes the optimal distortion to incentives---that is, it is the way of transferring utility from the agent to the principal at minimum loss to welfare (expected output less experiment cost). 

Expression~\ref{gammaexp} sheds light on how this optimal distortion is constructed showing that it largely depends on the complementarities in the cost of acquiring different signals in different states. To see this, suppose that the agent's liability limit binds in state $\theta'$ following decision $d'$ so that the Lagrange multiplier $\lambda[d',\theta']>0$ (i.e. the optimal contract would be different without this liability limit). To lever down the probability of receiving the signal to take decision $d'$ when the state is $\theta'$ the principal can no longer reduce the contract payment $b(d',\theta')$. As a result, other contract payments---those not immobilized against their liability limits---are recruited for this task. Expression~\ref{gammaexp} shows the way this happens. If a contract payment $b(d,\theta)$ is not stuck against the agent's liability limit then complementarity in the cost of signal probabilities $p(d'|\theta')$ and $p(d|\theta)$ together with $\lambda[d',\theta']>0$ adds to $\gamma(d,\theta)$. This reduces the contract payment $b(d,\theta)$ which in turn reduces the signal probability $p(d|\theta)$ and hence signal probability $p(d'|\theta')$. 

The term $p(d'|\theta')(1-\xi)$ in Expression~\ref{gammaexp} accounts for adjustments in the contract payments $b(d,\theta)$ necessary to keep the incentive and participation constraints holding following a reduction in $p(d'|\theta')$.   


\section{Optimal distortion under different costs}\label{poscomp}

\paragraph{Mutual Information}\label{simpledescription}

A simple and intuitive form for $\gamma$ arises when the agent's information cost matrix $k$ is given by the inverse Fisher information matrix
\[
k(\theta,\theta', p(\cdot|d)) = 
\begin{cases}
p(\theta|d)(1-p(\theta|d)) & \text{ if } \theta = \theta'\\
-p(\theta|d)p(\theta'|d) & \text{ if } \theta \neq \theta'
\end{cases}.
\]
Then the agent's cost function is expected reduction in Shannon entropy
\[c(p) = H_S(\pi) - \sum_{d\in D}p(d)H_S(p(\cdot|d)),\]
where $H_S$ denotes Shannon entropy, defined for each $q\in \Delta(\Theta)$ by
\[H_S(q) = -\sum_{\theta\in \Theta} q(\theta)\log q(\theta).\]
Then 
\begin{align*}
\gamma(d,\theta) = \sum_{\bar \theta}\frac{\lambda[d,\bar\theta]}{p(d)}-\frac{\lambda[d,\theta]}{p(d|\theta)\pi(\theta)}
\end{align*}
and so the optimal contract has the following simple description:

\begin{itemize}
\item \textbf{Decision $d$ penalty/reward:} 
$\hat \gamma(d) = \sum_{\bar \theta}\frac{\lambda[d,\bar\theta]}{p(d)};$
\item \textbf{Indexed output:} $y_I = y - \beta/\alpha^*$ where 
$\beta(\theta) = \min\{\alpha^* y(d,\theta)-\hat \gamma(d):d\in D\};$
\item \textbf{Piece rate:} $\alpha^* \in [0,1]$.
\end{itemize}
Here we assume the liability limits are implicitly understood: if the sum of promised payments falls outside the allowed range, the actual payment is the closest liability limit.

Notice that the decision $d$ penalty/reward depends on the number of states in which liability limits bind following decision $d$. When the agent's liability limit binds $\lambda[d,\theta] > 0$ which pushes the decision $d$ transfer in the direction of a punishment; when the principal's liability limit binds $\lambda[d,\theta] < 0$ which pushes the decision $d$ transfer in the direction of a reward. The magnitude of the multiplier indicates the cost to the principal of having the liability limit bind.

One can see from Expression (\ref{genlD}) that we recover the same contract form as when the agent's cost function is expected reduction in Shannon entropy provided the information cost matrix is of the form 
\[
k(\theta,\theta', p(\cdot|d))= p(\theta|d)g(\theta',p(\cdot|d))+ \mathbf{1}_{\{\theta'=\theta\}}h(\theta,p(\cdot|d))
\]
for some functions $f$ and $g$. However, from this form it is straightforward to show (using symmetry of the information cost matrix and that its rows sum to zero) that the cost function is necessarily proportional to expected reduction in Shannon entropy.

\paragraph{Expectation of Bregman divergence between prior and posterior}

Here we consider a class of cost functions defined in~\cite{HebertWoodford16}. These cost functions emerge from a dynamic information acquisition problem in which an agent at each point in time decides whether to take a decision or continue collecting information. Information collection is modeled as the agent choosing the covariances of a multi-dimensional Brownian motion that moves about the set of probability distributions over states, subject to a constraint on these covariances in terms of the complementarities and substitutabilities of acquiring information about different states. They show that this dynamic problem has a static representation in which the agent's cost function is defined as
\[
c(p) = \sum_{d\in D}p(d)D(p(\cdot|d)||\pi)
\]
where
\[
D(p(\cdot|d)||\pi) = H(p(\cdot|d)) - H(\pi) - (p(\cdot|d)-\pi)^T \nabla H(\pi)
\]
is the Bregman divergence associated with the convex function $H$ and where
\[
\frac{\partial}{\partial p(\theta|d)}\frac{\partial}{\partial p(\theta'|d)}H(p(\cdot|d)) = \frac{k(\theta,\theta', p(\cdot|d))}{p(\theta|d)p(\theta'|d)}. \addtag \label{genlD}
\]
The matrix $k$ is called the information cost matrix and characterizes the complementarities and substitutabilities in acquiring information about different states. It is positive semi-definite, symmetric, and its rows sum to zero. Given experiment $p$, there is complementarity in learning about distinct states $\theta$ and $\theta'$ when $k(\theta,\theta',p(\cdot|d))$ is negative and substitutabilities when it is positive; when $\theta = \theta'$ it is nonnegative and measures the difficulty of learning about state $\theta$.

For any cost function in this class, its Hessian is given by (see section~\ref{BW} for the derivation)
\[
\frac{\partial^2 c(p)}{\partial p(d|\theta)\partial p(d|\theta')}
 = p(d)\frac{k(\theta,\theta',p(\cdot|d))}{p(d|\theta)p(d|\theta')},
\]
and $0$ otherwise. Substitution into (\ref{gammaexp}), the expression for $\gamma$, gives
\[
\gamma(d,\theta)= -\frac{1}{\pi(\theta)}\sum_{\theta'}\frac{\partial^2 c(p)}{\partial p(d|\theta)\partial p(d|\theta')}\frac{\lambda[d,\theta']}{\pi(\theta')}.
\]

\section{Graphical method}

Here we use geometric methods to study the optimal contract when the agent's cost function is posterior separable. The agent's cost function is posterior-separable if there exists a nonnegative real valued function $\Upsilon$ on $\Delta(\Theta)$ such that 
\[
c(p) = \Upsilon(\pi) - \sum_{d\in D}p(d)\Upsilon(p(\cdot|d)).
\]

 We will refer to $\Upsilon$ the \emph{uncertainty function}. Note that $\Upsilon$ is concave if and only if $c$ takes on nonnegative values for all prior distributions and all experiments (\cite{degroot1962uncertainty}).\footnote{One attractive feature of posterior separable cost functions is that if one thinks of an experiment as a lottery over posterior distributions, then a cost function that is posterior separable is analogous to an expected utility function with the uncertainty function in place of Bernoulli utility.} Since we have assumed that the agent's cost functions takes on nonnegative values, $\Upsilon$ will always be a concave function.\footnote{My representation of the agent's preferences seems only to ensure that the cost function is nonnegative when his prior is $\pi$. I am implicitly assuming that the agent's preferences are stable in the sense that if he receives new information before interacting with the principal (and thus obtains a new prior), then his preferences are given by the same utility function but with the new prior in place of the old.}

 A mixture of the experiments $p$ and $p'$ with weighting $t$ and $1-t$ corresponds to a mixture of their posterior distributions on $\Theta$ with weighting $tp(d)/(tp(d)+(1-t)p'(d))$ and $(1-t)p'(d)/(tp(d)+(1-t)p'(d)).$ Basic calculation then shows that the agent's cost function $c$ is a convex function on the set of experiments. This proves the following lemma.

 \begin{lma}\label{convexcost}
 If the agent's cost function is posterior separable, then it is convex.
 \end{lma}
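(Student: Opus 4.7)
The plan is to exploit the two facts already assembled in the paragraph before the lemma: first, that mixing experiments mixes posteriors with the weights the authors write down, and second, that the uncertainty function $\Upsilon$ is concave. Convexity of $c$ then follows from a one-line application of Jensen's inequality at each decision, followed by summing over decisions.

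In more detail, I would fix experiments $p,p'$ and a weight $t\in[0,1]$, set $\tilde p = tp+(1-t)p'$, and write down the three identities I need: $\tilde p(d|\theta)=tp(d|\theta)+(1-t)p'(d|\theta)$, the marginal $\tilde p(d)=tp(d)+(1-t)p'(d)$, and---using Bayes' rule together with the posterior-mixing formula already stated in the excerpt---
\[
\tilde p(\cdot\,|d) \;=\; w_d\, p(\cdot\,|d) + (1-w_d)\, p'(\cdot\,|d), \qquad w_d \;=\; \frac{tp(d)}{\tilde p(d)}.
\]
Concavity of $\Upsilon$ then gives $\Upsilon(\tilde p(\cdot|d)) \geq w_d\Upsilon(p(\cdot|d))+(1-w_d)\Upsilon(p'(\cdot|d))$, and multiplying both sides by $\tilde p(d)$ produces the key cancellation
\[
\tilde p(d)\,\Upsilon(\tilde p(\cdot|d)) \;\geq\; tp(d)\,\Upsilon(p(\cdot|d)) + (1-t)p'(d)\,\Upsilon(p'(\cdot|d)).
\]

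Summing over $d\in D$, negating, and adding $\Upsilon(\pi)=t\Upsilon(\pi)+(1-t)\Upsilon(\pi)$ to both sides gives exactly $c(\tilde p)\leq tc(p)+(1-t)c(p')$, which is the desired convexity inequality. No step here is a genuine obstacle: the only thing to be careful about is the degenerate case where $\tilde p(d)=0$ for some $d$ (so that $w_d$ is undefined), but in that case both $p(d)$ and $p'(d)$ vanish and the term drops out of the sum on both sides, so the inequality is trivially preserved. The whole argument is exactly the "basic calculation" the authors allude to, organized so that concavity of $\Upsilon$ is applied posterior-by-posterior with the Bayes-rule weights $w_d$ doing the bookkeeping.
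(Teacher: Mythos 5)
Your proof is correct and is exactly the ``basic calculation'' the paper alludes to in the paragraph preceding the lemma: the same posterior-mixing weights, concavity of $\Upsilon$ applied posterior-by-posterior, and the cancellation after multiplying by $\tilde p(d)$. The only addition is your (correct) handling of the degenerate case $\tilde p(d)=0$, which the paper leaves implicit.
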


%
%
%
%

A benefit of using posterior separable cost functions is that they allow for a simple geometric representation of the agent's problem.\footnote{This representation is taken from~\cite{caplin2013behavioral}; the Bayesian persuasion literature initiated by~\cite{kamenica2011bayesian} develops a geometric approach similar to the one here using insights from~\cite{aumann1995repeated}.} Figure~\ref{geosol1} represents the agent's problem of choosing an experiment under a posterior separable cost function when there are two states and two decisions.

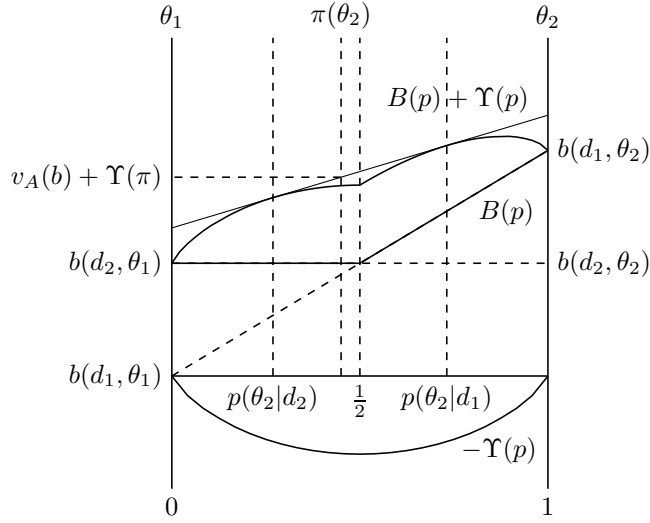
\begin{figure}[h]
\centering
\begin{tikzpicture}[xscale=5,yscale=1.5]
 \draw[semithick,-] (0,0) -- (1,0); 
\draw [semithick, -] (0,-1) node[below]{$0$} -- (0,3) node[above]{$\theta_1$}; 
\draw [semithick, -] (1,-1) node[below]{$1$}  -- (1,3) node[above]{$\theta_2$}; 
\draw [semithick, domain=0.0001:0.9999] plot (\x, {((\x)*ln(\x)+(1-\x)*ln(1-\x))})node at(0.87,-0.66){$-\Upsilon(p)$}; 
\draw [dashed, semithick,-] (0,0) node[left] {$b(d_1,\theta_1)$} -- (1,2) node[right] {$b(d_1,\theta_2)$} ;
\draw [dashed, semithick,-] (0,1)node[left] {$b(d_2,\theta_1)$} -- (1,1) node[right] {$b(d_2,\theta_2)$};
\draw [semithick, domain=0:0.5] plot (\x, {1});
\draw [semithick, domain=0.5:1] plot (\x, {2*\x}) node at(0.89,1.45){$B(p)$};
\draw [semithick, domain=0.0001:0.5]  plot (\x, {-((\x)*ln(\x)+(1-\x)*ln(1-\x))+1}) ;
\draw [semithick, domain=0.5:0.9999]  plot (\x, {-((\x)*ln(\x)+(1-\x)*ln(1-\x))+2*\x}) node at(0.76,2.45){$B(p)+\Upsilon(p)$};
\draw [domain=0:1]  plot (\x, {\x+1.31326});
\draw [semithick, dashed ,-] (0.268941,0) node[below] {$p(\theta_2|d_2)$} -- (0.268941,3);
\draw [semithick, dashed ,-] (0.731059,0) node[below] {$p(\theta_2|d_1)$} -- (0.731059,3);
\draw [semithick, dashed ,-] (0.45,0) -- (0.45,3) node[above] {$\pi(\theta_2)$};
\draw [semithick, dashed ,-] (0.5,0) node[below] {$\frac{1}{2}$} -- (0.5,3) ;
\draw [semithick, dashed ,-] (0.45,1.76326) -- (0,1.76326) node[left] {$v_A(b)+\Upsilon(\pi)$};
\end{tikzpicture}
\caption{\small{The agent's problem for $\pi(\theta_2)=0.45$ and contract $b(d_1,\theta_1)=0$, $b(d_2,\theta_1)=b(d_2,\theta_2)=1$, and $b(d_1,\theta_2)=2$. Reduced form $B(p)$ and upper envelope of net utilities $B(p)+\Upsilon(p)$ as a function of the posterior probability assigned to state $\theta_2$. Optimal posteriors are given by maximizing expected net utility subject to the average posterior equaling the prior.}}\label{geosol1}
\end{figure}
Contract payments in states $\theta_1$ and $\theta_2$ are shown on the left and right vertical axes. The horizontal axis measures the probability assigned to state $\theta_2$. The dashed lines connecting contract payments in different states show the expected payoffs of each decision as a function of the probability assigned to state $\theta_2$. The upper envelope of these lines is called the \emph{reduced form} of contract $b$ and is denoted by $B(p)$ (That is, $B:\Delta(\Theta)\to \BR$ is defined as the agent's maximal expected contract payment as a function of his posterior belief about the state: $B(p) = \max\{E_p[b(d,\cdot)]:d\in D\}$). The reduced form describes the decision the agent will take as a function of his posterior belief about the state.
For the contract shown, the agent will choose decision $d_2$ when the probability he assigns to state $\theta_2$ is less than $1/2$ and he will choose decision $d_1$ when the probability he assigns to state $\theta_2$ is greater than $1/2$. 

Because the cost function is posterior separable the agent's objective function may be written as 
\begin{align*}
\sum_{d}p(d)\left(\sum_{\theta}p(\theta|d)b(d,\theta) + \Upsilon(p(\cdot|d))\right) - \Upsilon(\pi).
\end{align*} 
Following~\cite{caplin2013behavioral}, the term $\sum_{\theta}p(\theta|d)b(d,\theta) + \Upsilon(p(\cdot|d))$ is called the \emph{net utility} of decision $d$. The upper envelope of the net utility for decision $d_1$ and the net utility for decision $d_2$ is given by $B(p)+\Upsilon(p)$. Note that the average of the agent's posterior beliefs is equal to his prior belief and therefore the agent's utility is given by weighting the optimal net utilities---given by $B+\Upsilon$---with weights that average the posteriors to the prior. In other words, the optimal posteriors are computed by considering the ``best'' tangent hyperplane to the upper envelope of the net utility function, $B+\Upsilon$, or, alternatively, as in~\cite{kamenica2011bayesian}, by evaluating the concavification of $B+\Upsilon$ at the prior.\footnote{The \emph{concavification} of a function is the smallest concave function everywhere weakly greater than the function.} In the figure, the optimal posterior probabilities are shown as well as the point $(\pi(\theta_2),v_A(b)+\Upsilon(\pi))$ corresponding (but not equal) to the agent's maximized utility.

\paragraph{Altering a contract by a state-dependent transfer.} 

The agent's behavior is unchanged when his contract is altered by a state-dependent transfer since $E_p[b-\beta]-c(p) = E_p[b] - E_\pi[\beta]-c(p)$ for all $\beta:\Theta \to \BR$. Figure~\ref{geosol2} illustrates this for the contract shown in black and the state contingent payment $\beta(\theta_1)=0$, $\beta(\theta_2)=1$. Note that the optimal experiment would not change if $\beta(\theta_2)>1$ but the resulting contract would not be feasible.
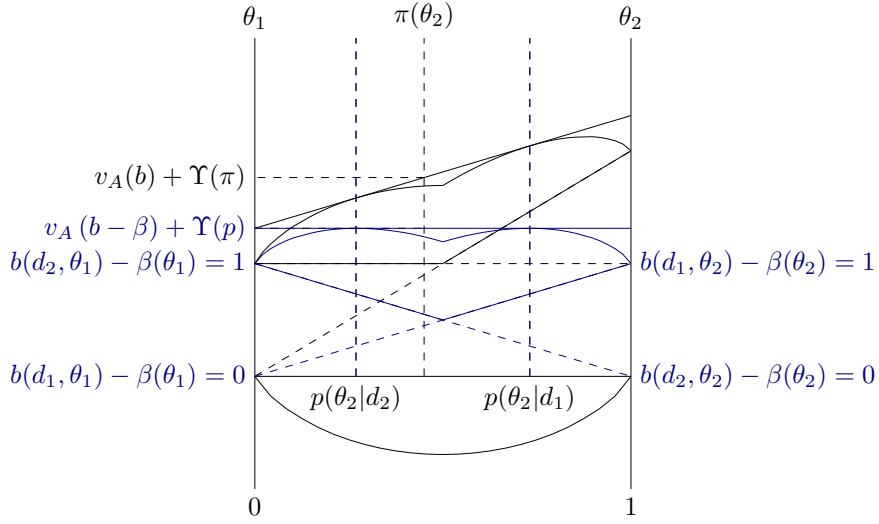
\begin{figure}[H]
\centering
\begin{tikzpicture}[xscale=5,yscale=1.5]
 \draw[very thin,-] (0,0) -- (1,0); 
\draw [very thin, -] (0,-1) node[below]{$0$} -- (0,3) node[above]{$\theta_1$}; 
\draw [very thin, -] (1,-1) node[below]{$1$}  -- (1,3) node[above]{$\theta_2$}; 
\draw [very thin, domain=0.0001:0.9999] plot (\x, {((\x)*ln(\x)+(1-\x)*ln(1-\x))}); 
\draw [dashed, very thin,-] (0,0) -- (1,2);
\draw [dashed, very thin,-] (0,1) -- (1,1);
\draw [very thin,domain=0:0.5] plot (\x, {1});
\draw [very thin,domain=0.5:1] plot (\x, {2*\x});
\draw [very thin, domain=0.0001:0.5]  plot (\x, {-((\x)*ln(\x)+(1-\x)*ln(1-\x))+1});
\draw [very thin, domain=0.5:0.9999]  plot (\x, {-((\x)*ln(\x)+(1-\x)*ln(1-\x))+2*\x});
\draw [very thin, domain=0:1]  plot (\x, {\x+1.31326});
\draw [very thin, dashed ,-] (0.268941,0) node[below] {$p(\theta_2|d_2)$} -- (0.268941,3);
\draw [very thin, dashed ,-] (0.731059,0) node[below] {$p(\theta_2|d_1)$} -- (0.731059,3);
\draw [very thin, dashed ,-] (0.45,0) -- (0.45,3) node[above] {$\pi(\theta_2)$};
\draw [very thin, dashed ,-] (0.45,1.76326) -- (0,1.76326) node[left] {$v_A(b)+\Upsilon(\pi)$};
\draw [NavyBlue, dashed, very thin,-] (0,0) node[left] {$b(d_1,\theta_1)-\beta(\theta_1)=0$} -- (1,1) node[right] {$b(d_1,\theta_2)-\beta(\theta_2)=1$} ;
\draw [NavyBlue, dashed, very thin,-] (0,1) node[left] {$b(d_2,\theta_1)-\beta(\theta_1)=1$} -- (1,0) node[right] {$b(d_2,\theta_2)-\beta(\theta_2)=0$};
\draw [NavyBlue, very thin,domain=0:0.5] plot (\x, {1-\x});
\draw [NavyBlue, very thin,domain=0.5:1] plot (\x, {\x});
\draw [NavyBlue, very thin, domain=0.0001:0.5]  plot (\x, {-((\x)*ln(\x)+(1-\x)*ln(1-\x))+1-\x});
\draw [NavyBlue, very thin, domain=.5:0.9999]  plot (\x, {-((\x)*ln(\x)+(1-\x)*ln(1-\x))+\x});
\draw [NavyBlue, very thin, domain=0:1]  plot (\x, {1.31326});
\draw [NavyBlue, very thin, dashed ,-] (0.268941,0) -- (0.268941,3);
\draw [NavyBlue, very thin, dashed ,-] (0.731059,0) -- (0.731059,3);

\draw [NavyBlue, very thin, dashed ,-] (0,1.31326) node[left] {$v_A\left(b-\beta\right)+ \Upsilon(p)$} -- (0.45,1.31326);
\end{tikzpicture}
\caption{\small Altering the contract $b$ by the state contingent payment $\beta(\theta_1) = 0$, $\beta(\theta_2)=1$ does not alter the experiment the agent chooses.}\label{geosol2}
\end{figure}

\paragraph{Scaling a contract when capacity constraints binds.} 

When the agent's capacity constraint binds under his chosen experiment it is possible to scale down---multiply by a factor less than one---the contract without altering the agent's behavior. Figure~\ref{cost} shows that imposing a binding capacity constraint alters the agent's chosen experiment by bringing the posterior probabilities closer to the prior. Figure~\ref{scale} shows that given the binding capacity constraint the contract $b$ can be scaled down by any factor $\alpha$ in $[1/2,1]$ without altering the agent's chosen experiment.

\begin{figure}[H]
\centering
\subfloat[]{
\begin{tikzpicture}[xscale=5,yscale=1.5]
 \draw[very thin,-] (0,0) -- (1,0); 
\draw [very thin, -] (0,-1) node[below]{$0$} -- (0,3) node[above]{$\theta_1$}; 
\draw [very thin, -] (1,-1) node[below]{$1$}  -- (1,3) node[above]{$\theta_2$}; 
\draw [very thin, domain=0.0001:0.9999] plot (\x, {((\x)*ln(\x)+(1-\x)*ln(1-\x))}); 
\draw [dashed, very thin,-] (0,0) node[left] {$b(d_1,\theta_1)=0$} -- (1,2) node[right] {$b(d_1,\theta_2)=2$} ;
\draw [dashed, very thin,-] (0,1) node[left] {$b(d_2,\theta_1)=1$} -- (1,1) node[right] {$b(d_2,\theta_2)=1$};
\draw [very thin,domain=0:0.5] plot (\x, {1});
\draw [very thin,domain=0.5:1] plot (\x, {2*\x});
\draw [very thin, domain=0.0001:0.5]  plot (\x, {-((\x)*ln(\x)+(1-\x)*ln(1-\x))+1});
\draw [very thin, domain=0.5:0.9999]  plot (\x, {-((\x)*ln(\x)+(1-\x)*ln(1-\x))+2*\x});
\draw [very thin, domain=0:1]  plot (\x, {\x+1.31326});
\draw [very thin, dashed ,-] (0.268941,0) -- (0.268941,3);
\draw [very thin, dashed ,-] (0.731059,0) -- (0.731059,3);
\draw [very thin, dashed ,-] (0.45,0) -- (0.45,3) node[above] {$\pi(\theta_2)$};
\draw [very thin, dashed ,-] (0.45,1.76326) -- (0,1.76326);
\draw [BrickRed, very thin, domain=0:1]  plot (\x, {\x+1.28531});
\draw [BrickRed, very thick, dashed ,-] (0.377541,-0.662847) -- (0.377541,3);
\draw [BrickRed, very thick, dashed ,-] (0.622459,-0.662847) -- (0.622459,3);
\draw [BrickRed, very thin, dashed ,-] (0,1.73531) -- (0.45,1.73531);
\draw [BrickRed, very thick, dashed ,-] (0,-0.662847)  -- (1,-0.662847);
\draw [BrickRed, very thick, dashed ,-] (0.45,-0.8) -- (0.45,-0.8) ;
\end{tikzpicture}\label{cost}}\\
\subfloat[]{
\begin{tikzpicture}[xscale=5,yscale=1.5]
 \draw[very thin,-] (0,0) -- (1,0); {}
\draw [very thin, -] (0,-1) node[below]{$0$} -- (0,3) node[above]{$\theta_1$}; 
\draw [very thin, -] (1,-1) node[below]{$1$}  -- (1,3) node[above]{$\theta_2$}; 
\draw [very thin, domain=0.0001:0.9999] plot (\x, {((\x)*ln(\x)+(1-\x)*ln(1-\x))}); 
\draw [dashed, very thin,-] (0,0) -- (1,2)  ;
\draw [dashed, very thin,-] (0,1)  -- (1,1) ;
\draw [very thin,domain=0:0.5] plot (\x, {1});
\draw [very thin,domain=0.5:1] plot (\x, {2*\x});
\draw [very thin, domain=0.0001:0.5]  plot (\x, {-((\x)*ln(\x)+(1-\x)*ln(1-\x))+1});
\draw [very thin, domain=0.5:0.9999]  plot (\x, {-((\x)*ln(\x)+(1-\x)*ln(1-\x))+2*\x});
\draw [very thin, domain=0:1]  plot (\x, {\x+1.31326});
\draw [very thin, dashed ,-] (0.45,0) -- (0.45,3) node[above] {$\pi(\theta_2)$};
\draw [BrickRed, very thin, domain=0:1]  plot (\x, {\x+1.28531});
\draw [BrickRed, very thin, dashed ,-] (0,1.73531) -- (0.45,1.73531);
\draw [BrickRed, very thick, dashed ,-] (0,-0.662847) node[left] {$\Upsilon(\pi)-k$} -- (1,-0.662847);
\draw [BrickRed, very thick, dashed ,-] (0.45,-0.8) -- (0.45,-0.8) ;
\draw [NavyBlue, dashed, very thin,-] (0,0) node[left] {$\frac{b(d_1,\theta_1)}{2}=0$} -- (1,1) node[right] {$\frac{b(d_1,\theta_2)}{2}=1$} ;
\draw [NavyBlue, dashed, very thin,-] (0,.5) node[left] {$\frac{b(d_2,\theta_1)}{2}=0.5$} -- (1,.5) node[right] {$\frac{b(d_2,\theta_2)}{2}=0.5$};
\draw [NavyBlue, very thick,domain=0:0.5] plot (\x, {1/2});
\draw [NavyBlue, very thick,domain=0.5:1] plot (\x, {\x});
\draw [NavyBlue, very thick, domain=0.0001:0.5]  plot (\x, {-((\x)*ln(\x)+(1-\x)*ln(1-\x))+0.5});
\draw [NavyBlue, very thick, domain=.5:0.9999]  plot (\x, {-((\x)*ln(\x)+(1-\x)*ln(1-\x))+\x});
\draw [NavyBlue, very thick, domain=0:1]  plot (\x, {0.5*\x+0.974077});
\draw [BrickRed, very thick, dashed ,-] (0.377541,-0.662847) -- (0.377541,3);
\draw [BrickRed, very thick, dashed ,-] (0.622459,-0.662847) -- (0.622459,3);

\draw [NavyBlue, very thin, dashed ,-] (0,1.19908) node[left] {$v_A\left(\frac{b}{2}\right)+\Upsilon(\pi)$} -- (0.45,1.19908);
\end{tikzpicture}\label{scale} }

\caption{\small \protect\subref{cost} A binding capacity constraint (thick dashed lines) brings the posterior probabilities closer to the prior. 
%
%
\protect\subref{scale} With a binding capacity constraint (thick dashed lines), scaling incentives in half (thick solid lines) does not alter the agent's chosen experiment. 
%
}
\end{figure}
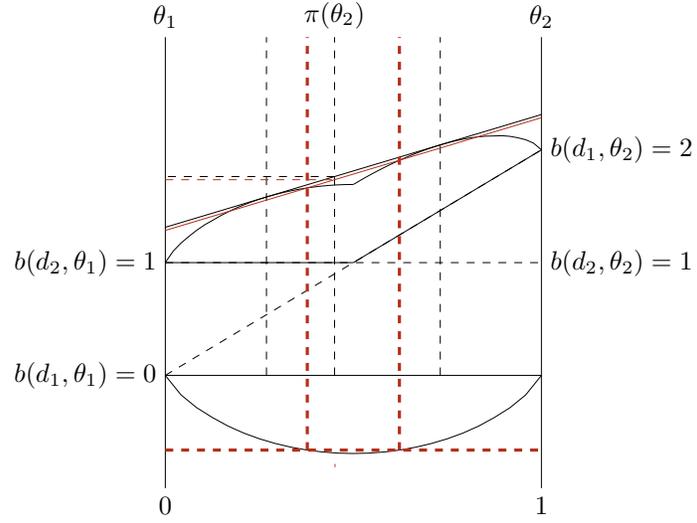
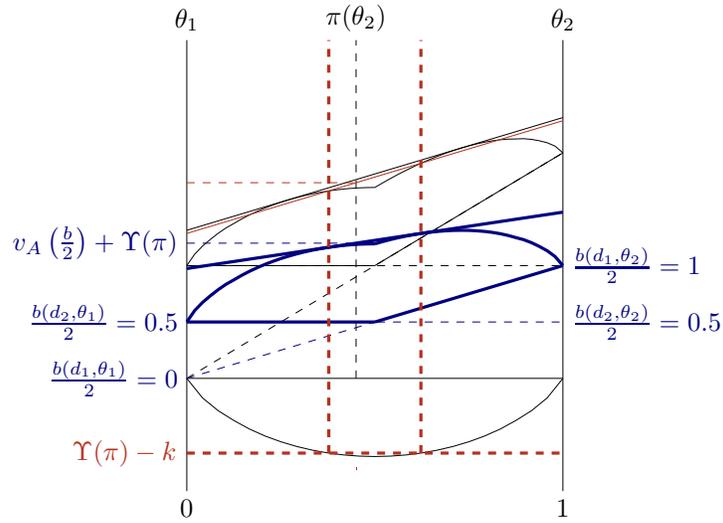

\section{Two state, two decision example}

Consider the following problem in which a principal hires an agent to choose between a safe and risky course of action:
\begin{figure}[h]
\centering
\begin{tabular}{ r !{\vrule width0.9pt}   c  c }
 $y(d,\theta)$ &  $\theta_1$ &  $\theta_2$  \\ \Xhline{0.9pt}
 $d_1$  & 0 & 10   \\
 $d_2$  & 5 & 5  
\end{tabular}
\end{figure}

Suppose the agent's prior belief on state $\theta_1$ is $2/3$ and that the set of feasible experiments are those with cost less than $k=1/2$.

\begin{center}
\emph{First-best contracts}
\end{center}

We first solve for Pareto optimal contracts that induce first-best experimentation. 
Since altering a contract by a state-dependent transfer does not alter the experiment chosen by the agent, any contract $y-\beta$ with $\beta:\Theta \to \BR$ induces first best experimentation. Similarly, if the agent's capacity constraint binds under these contracts, then there exists a number $\alpha'$ in $[0,1]$ (as defined in Proposition~\ref{prop2}) such that for each of the contracts $\alpha(y-\beta)$ with $\alpha$ in $[\alpha',1]$ the agent chooses the same first-best experiment. Therefore, each contract of the form
\begin{align}\label{first_best_set_of_contracts}
\alpha(y-\beta) \text{ with } \alpha \in [\alpha',1] \text{ and } \beta:\Theta\to \BR \text{ chosen so the contract is feasible}
\end{align}
is Pareto optimal and induces first-best experimentation. The least favorable of these contracts for the agent is $\alpha'(y+\beta')$ with $\beta'$ defined by $\beta'(\theta_1)=0$, $\beta'(\theta_2) = -5$. So the set of agent reservation utilities corresponding to these contracts is $[v_A(\alpha'(y-\beta')),v_A(y)]$.

Results from \cite{matejka2014rational} and \cite{caplin2013behavioral} show that when the agent's cost function is expected Shannon entropy reduction the optimal experiment resembles a logit-rule with the observable part of the utility from taking decision $d$ in state $\theta$ given by $b(d,\theta)/(1+\mu) + \log p(d)$:
\begin{align}\label{Shannonexperiment}
p(d|\theta) = \frac{p(d)e^\frac{b(d,\theta)}{1+\mu}}{\sum_{\bar d}p(\bar d)e^\frac{b(\bar d,\theta)}{1+\mu}}.
\end{align}
The term $\mu$ is a nonnegative dual variable which is chosen as small as possible subject to the agent's capacity constraint binding.
Using Equation (\ref{Shannonexperiment}), if the agent's capacity constraint were not to bind then the first-best experiment would be as shown in Table~\ref{first_best_experiment_unconstrained}.
\begin{table}[h]
\centering
\subfloat[]{
\begin{tabular}{ r !{\vrule width0.9pt} c  c }
 $p(d|\theta)$ &  $\theta_1$ &  $\theta_2$  \\ \Xhline{0.9pt}
 $d_1$  & 0.007   & 0.993   \\
 $d_2$  &  0.993 &  0.007
\end{tabular}\label{first_best_experiment_unconstrained}
}
\hspace{5em}
\subfloat[]{
\begin{tabular}{ r !{\vrule width0.9pt} c  c }
 $p(d|\theta)$ &  $\theta_1$ &  $\theta_2$  \\ \Xhline{0.9pt}
 $d_1$  & 0.031 &  0.969   \\
 $d_2$  & 0.969 &   0.031 
\end{tabular}\label{first_best_experiment_constrained}
}\captionof{table}{\small \protect\subref{first_best_experiment_unconstrained} First best experiment when the agent's capacity constraint does not bind. \protect\subref{first_best_experiment_constrained} First best experiment under capacity constraint $c(p)\leq 1/2$.}\label{Table1}
\end{table}
\noindent However, this experiment is not feasible since its cost is $0.596$. Instead, the agent chooses the experiment show in in Table~\ref{first_best_experiment_constrained} which has cost $1/2$ and is given by setting $\mu = 0.446$ in Equation (\ref{Shannonexperiment}) with $b=y$. This implies that $\alpha' = 1/(1+\mu) = 0.692$. Under this experiment, the agent is slightly more prone to error than under the unconstrained experiment. 

The minimum agent utility for a first-best contract is $v_A(\alpha'(y-\beta')) = 2.853$ and the maximum utility is $v_A(y) = 6.014$; the first-best Pareto optimal contracts given by Equation (\ref{first_best_set_of_contracts}) deliver the agent utilities in this range. Under each of these contracts, the agent chooses the experiment shown in Table~\ref{first_best_experiment_constrained}. Note that any Pareto optimal contract that delivers the agent a utility in this range must induce first best experimentation (otherwise the contracts found here would be an improvement, contradicting Pareto optimality).

\begin{center}
\emph{Second-best contracts}
\end{center}

In the previous section, we showed that if the agent's reservation utility is above $2.853$, then all Pareto optimal contracts induce first-best experimentation. We now consider the case where the agent's reservation utility is below this level. 

For a second-best contract the minimum payment in each state is $0$. Consider the case where $b(d_1,\theta_1)$ and $b(d_2,\theta_2)$ are $0$ and $b(d_1,\theta_2)$ and $b(d_2,\theta_1)$ are positive and such that the principal's liability limits do not bind. Then $\lambda[d_1,\theta_2]$ and $\lambda[d_2,\theta_1]$ are $0$. Using the form of contract given in Proposition~\ref{prop3} and section~\ref{simpledescription} as well as the formula for the state-dependent transfer gives

\begin{align*}
\hat \gamma(d_1) = \frac{ \lambda[d_1,\theta_1]}{p(d_1)}, &\text{  } \beta(\theta_1) = \alpha y(d_1,\theta_1)+\hat \gamma(d_1) + \frac{ \lambda[d_1,\theta_1]}{p(d_1,\theta_1)}\\
\hat \gamma(d_2) = \frac{ \lambda[d_2,\theta_2]}{p(d_2)}, &\text{  } \beta(\theta_2) = \alpha y(d_2,\theta_2)+\hat \gamma(d_2) + \frac{  \lambda[d_2,\theta_2]}{p(d_2,\theta_2)}\\
\end{align*}
Since $\sum_{d}\lambda[d,\theta] = (1-\xi)\pi(\theta)$, we have $\lambda[d_1,\theta_1]=(1-\xi)\pi(\theta_1)$ and $\lambda[d_2,\theta_2]=(1-\xi)\pi(\theta_2)$ and so
\begin{align}
b(d_1,\theta_2) &= \alpha(10-5) +\frac{(1-\xi)\pi(\theta_2)}{p(d_2)}-\frac{(1-\xi)\pi(\theta_2)}{p(d_2,\theta_2)}-\frac{(1-\xi)\pi(\theta_1)}{p(d_1)}\label{opconShannon1}  \\ 
b(d_2,\theta_1) &= \alpha(5-0) +\frac{(1-\xi)\pi(\theta_1)}{p(d_1)}-\frac{(1-\xi)\pi(\theta_1)}{p(d_1,\theta_1)}-\frac{(1-\xi)\pi(\theta_2)}{p(d_2)} \label{opconShannon2}
\end{align}
\noindent Recall that the minimum agent reservation utility such that the optimal contract induces first-best experimentation is $2.853$. For this reservation utility the unique optimal contract is given by 
\begin{equation*}
\frac{1}{1+\mu}(y+\beta'),
\end{equation*}
where $\mu=0.446$ and $\beta'(\theta_1)=0$, $\beta'(\theta_2)=5$. Note that for a first-best contract we have $\xi=1$ since a fall in the agent's reservation utility induces a one-for-one increase in the principal's utility. Using $\xi=1$ in Equations (\ref{opconShannon1}) and (\ref{opconShannon2}) and setting $\alpha$ such that the cost of the optimal experiment is $k=0.5$, i.e. $\alpha=1/(1+0.446)$, gives the same contract as the first-best contract for reservation utility $2.853$.
To solve for this optimal contract in general requires solving the system of equations (\ref{Shannonexperiment}), (\ref{opconShannon1}), and (\ref{opconShannon2}) and the inequalities given by the agent's capacity constraint and participation constraint.

A special case arises when the prior is uniform. Then a solution to the system of equations is given by $p(d_1)=p(d_2)=1/2$, $p(d_1,\theta_1) = p(d_2,\theta_2)$, $b(d_1,\theta_2) = b(d_2,\theta_1)$ which implies that the optimal contract is affine and given by $b(d_1,\theta_2) = 5\alpha -t$, $b(d_2,\theta_t) = 5\alpha -t$, and $0$ otherwise, where $t = (1-\xi)/p(d_1|\theta_1) = (1-\xi)/p(d_2|\theta_2)$.

We now develop some graphical intuition for the second-best Pareto optimal contracts characterized in this example making the simplifying assumptions that the agent's capacity constraint does not bind---so that $\alpha^* = 1$---and that the agent's reservation utility is sufficiently low that his participation constraint does not bind---which implies $\xi=0$. Solving Equations (\ref{Shannonexperiment}), (\ref{opconShannon1}), and (\ref{opconShannon2}) then gives the following optimal contract, experiment, state-dependent transfer $\beta$, and decision-dependent transfer $\gamma$

\begin{table}[H]
\centering
\subfloat[]{
\begin{tabular}{ r !{\vrule width0.9pt} c  c }
 $b=y-\beta-\gamma$ &  $\theta_1$ &  $\theta_2$  \\ \Xhline{0.9pt}
 $d_1$  & 0 & 1.00   \\
 $d_2$  & 0.702 & 0  
\end{tabular}\label{optimalcontractsecondbest}
}
\hspace{5em}
\subfloat[]{\begin{tabular}{ r !{\vrule width0.9pt} c  c }
 $p(d|\theta)$ &  $\theta_1$ &  $\theta_2$  \\ \Xhline{0.9pt}
 $d_1$  & 0.160 & 0.514   \\
 $d_2$  & 0.840 & 0.486  
\end{tabular}\label{optimalexperimentfortheoptimecontract}}
\\
\vspace{2em}
\subfloat[]{
\begin{tabular}{ r !{\vrule width0.9pt} c  c }
 $\beta$ &  $\theta_1$ &  $\theta_2$  \\ \Xhline{0.9pt}
 $d_1$  & 3.836 & 6.596   \\
 $d_2$  & 3.836 & 6.596  
\end{tabular}\label{beta}
}
\hspace{5em}
\subfloat[]{\begin{tabular}{ r !{\vrule width0.9pt} c  c }
 $\gamma$ &  $\theta_1$ &  $\theta_2$  \\ \Xhline{0.9pt}
 $d_1$  & -3.836 & 2.404   \\
 $d_2$  & 0.462 & -1.596  
\end{tabular}\label{gamma}}\captionof{table}{\small \protect\subref{optimalcontractsecondbest} Optimal contract \protect\subref{optimalexperimentfortheoptimecontract} Optimal experiment \protect\subref{beta} State-dependent transfer \protect\subref{gamma} Optimal distortion}\label{Table2}
\end{table}
\noindent We will now build this optimal contract in stages. First we consider the contract $y-\beta$ shown in Table~\ref{first_best_contract_beta}. Since this contract is output adjusted by a state-dependent transfer the agent chooses a first-best experiment. Table~\ref{first_best_experiment_unconstrained1} presents the optimal experiment for this contract and Figure~\ref{firstbest} depicts the optimal posterior probabilities.
\begin{table}[H]
\centering
\subfloat[]{
\begin{tabular}{ r !{\vrule width0.9pt} c  c }
 $y-\beta$ &  $\theta_1$ &  $\theta_2$  \\ \Xhline{0.9pt}
 $d_1$  & -3.836 & 3.404   \\
 $d_2$  & 1.164 & -1.596  
\end{tabular}\label{first_best_contract_beta}
}
\hspace{5em}
\subfloat[]{
\begin{tabular}{ r !{\vrule width0.9pt} c  c }
 $p(d|\theta)$ &  $\theta_1$ &  $\theta_2$  \\ \Xhline{0.9pt}
 $d_1$  & 0.007   & 0.993   \\
 $d_2$  &  0.993 &  0.007
\end{tabular}\label{first_best_experiment_unconstrained1}
}\captionof{table}{\small \protect\subref{first_best_contract_beta} Contract $y-\beta$. \protect\subref{first_best_experiment_unconstrained1} First best experiment under contract $y-\beta$.}
\end{table}

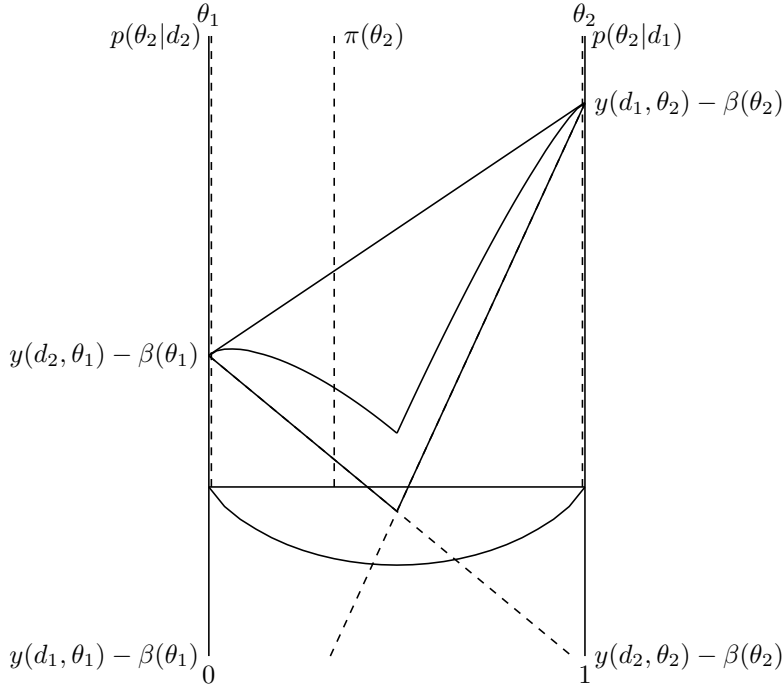
\begin{figure}[H]
\centering
\begin{tikzpicture}[xscale=5,yscale=1.5]
\draw[semithick,-] (0,0) -- (1,0)  ; 
\draw [semithick, -] (0,-1.5) node[below]{$0$} -- (0,4) node[above]{$\theta_1$}; 
\draw [semithick, -] (1,-1.5) node[below]{$1$}  -- (1,4) node[above]{$\theta_2$}; 
\draw [semithick, domain=0.0001:0.9999] plot (\x, {((\x)*ln(\x)+(1-\x)*ln(1-\x))}); 
\draw [dashed, semithick,-] (0.323,-1.5) -- (1,3.404) node[right] {$y(d_1,\theta_2)-\beta(\theta_2)$} ;
\draw [dashed, semithick,-] (0,-1.5)node[left] {$y(d_1,\theta_1)-\beta(\theta_1)$} ;
\draw [dashed, semithick,-] (0, 1.164) node[left] {$y(d_2,\theta_1)-\beta(\theta_1)$}-- (0.965,-1.5);
\draw [dashed, semithick,-] (1,-1.5) node[right] {$y(d_2,\theta_2)-\beta(\theta_2)$};
\draw [semithick, domain=0.0001:0.500155] plot (\x, {-2.7606*\x + 1.164}); 
\draw [semithick, domain=0.500155:1]   plot (\x, {7.244*\x - 3.8397});
\draw [semithick, domain=0.0001:0.500155]  plot (\x, {-((\x)*ln(\x)+(1-\x)*ln(1-\x))-2.7606*\x + 1.164});
\draw [semithick, domain=0.500155:1]  plot (\x, {-((\x)*ln(\x)+(1-\x)*ln(1-\x))+7.244*\x - 3.8397});
\draw [semithick, domain=0:1]  plot (\x, {2.24248*\x + 1.16853)});
\draw [semithick, dashed ,-] (0.00669168,0) -- (0.00669168,4) node[left] {$p(\theta_2|d_2)$};
\draw [semithick, dashed ,-] (0.993332,0)  -- (0.993332,4) node[right] {$p(\theta_2|d_1)$};
\draw [semithick, dashed ,-] (1/3,0) -- (1/3,4) node[right] {$\pi(\theta_2)$};
\end{tikzpicture}
\caption{\small Contract $y-\beta$ induces first best experimentation.}\label{firstbest}
\end{figure}
The contract $y-\beta$ induces very precise experimentation with the agent choosing the wrong decision with a $0.7$ percent probability in either state.

Now consider the feasible contract induced by $y-\beta$ which is its truncation $\max\{0,y-\beta\}$. Table~\ref{experiment for truncated contract} shows the optimal experiment for this contract and Figure~\ref{thirdbest} depicts the optimal posterior probabilities (thick lines) and compares it to the optimal posterior probabilities for $y-\beta$ (thin lines)

\begin{figure}[H]
\centering
\begin{tabular}{ r !{\vrule width0.9pt} c  c }
 $p(d|\theta)$ &  $\theta_1$ &  $\theta_2$  \\ \Xhline{0.9pt}
 $d_1$  & 0.211 & 0.963   \\
 $d_2$  & 0.789 & 0.037  
\end{tabular}\captionof{table}{\small Optimal experiment for contract $\max\{0,y-\beta\}$}\label{experiment for truncated contract}
\end{figure}

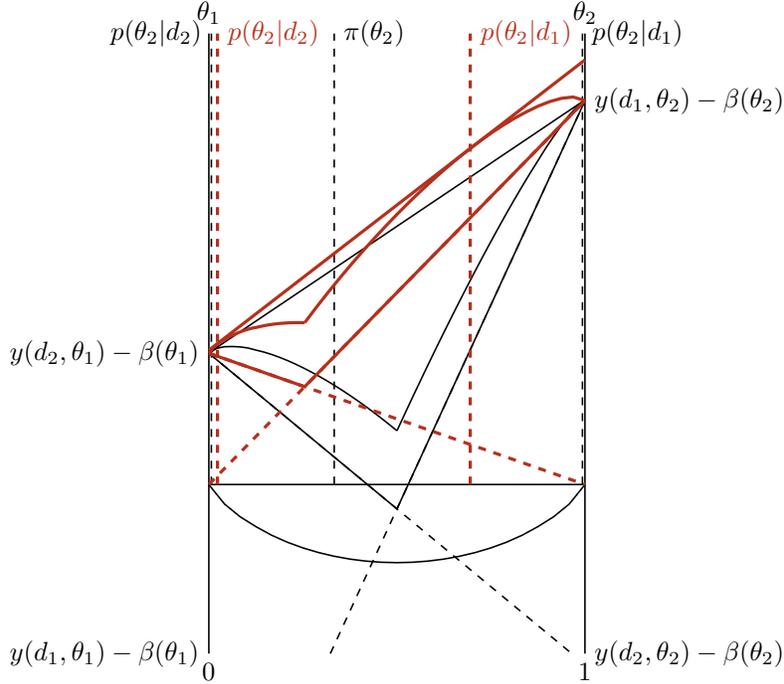
\begin{figure}[H]
\centering
\begin{tikzpicture}[xscale=5,yscale=1.5]
 \draw[semithick,-] (0,0) -- (1,0)  ; 
\draw [semithick, -] (0,-1.5) node[below]{$0$} -- (0,4) node[above]{$\theta_1$}; 
\draw [semithick, -] (1,-1.5) node[below]{$1$}  -- (1,4) node[above]{$\theta_2$}; 
\draw [semithick, domain=0.0001:0.9999] plot (\x, {((\x)*ln(\x)+(1-\x)*ln(1-\x))}); 
\draw [dashed, semithick,-] (0.323,-1.5) -- (1,3.404) node[right] {$y(d_1,\theta_2)-\beta(\theta_2)$} ;
\draw [dashed, semithick,-] (0,-1.5)node[left] {$y(d_1,\theta_1)-\beta(\theta_1)$} ;
\draw [dashed, semithick,-] (0, 1.164) node[left] {$y(d_2,\theta_1)-\beta(\theta_1)$}-- (0.965,-1.5);
\draw [dashed, semithick,-] (1,-1.5) node[right] {$y(d_2,\theta_2)-\beta(\theta_2)$};
\draw [semithick, domain=0.0001:0.500155] plot (\x, {-2.7606*\x + 1.164}); 
\draw [semithick, domain=0.500155:1]   plot (\x, {7.244*\x - 3.8397});
\draw [semithick, domain=0.0001:0.500155]  plot (\x, {-((\x)*ln(\x)+(1-\x)*ln(1-\x))-2.7606*\x + 1.164});
\draw [semithick, domain=0.500155:1]  plot (\x, {-((\x)*ln(\x)+(1-\x)*ln(1-\x))+7.244*\x - 3.8397});
\draw [semithick, domain=0:1]  plot (\x, {2.24248*\x + 1.16853)});
\draw [semithick, dashed ,-] (0.00669168,0) -- (0.00669168,4) node[left] {$p(\theta_2|d_2)$};
\draw [semithick, dashed ,-] (0.993332,0)  -- (0.993332,4) node[right] {$p(\theta_2|d_1)$};
\draw [semithick, dashed ,-] (1/3,0) -- (1/3,4) node[right] {$\pi(\theta_2)$};

\draw [BrickRed, dashed, very thick,-] (0,0) -- (1,3.404) ;
\draw [BrickRed, dashed, very thick,-] (0, 1.164) -- (1,0);
\draw [BrickRed, very thick, domain=0.0001:0.255]  plot (\x, {1.164 - (1.164)*\x});
\draw [BrickRed, very thick, domain=0.255:1]  plot (\x, {(3.404)*\x});
\draw [BrickRed, very thick, domain=0.0001:0.255]  plot (\x, {-((\x)*ln(\x)+(1-\x)*ln(1-\x))+1.1640 - (1.1640)*\x});
\draw [BrickRed, very thick, domain=0.255:1]  plot (\x, {-((\x)*ln(\x)+(1-\x)*ln(1-\x))+(3.404)*\x});
\draw [BrickRed, very thick, domain=0:1]  plot (\x, {2.575*\x + 1.191)});
\draw [BrickRed, very thick, dashed ,-] (0.023,0) -- (0.023,4) node[right] {$p(\theta_2|d_2)$};
\draw [BrickRed, very thick, dashed ,-] (0.695,0)  -- (0.695,4) node[right] {$p(\theta_2|d_1)$};
\end{tikzpicture}
\caption{\small Optimal experiments for $\max\{0,y-\beta\}$ (thick lines) and $y-\beta$ (thin lines).}\label{thirdbest}
\end{figure}

Under contract $\max\{0,y-\beta\}$, when the agent chooses decision $d_2$ he does so with approximately the same level of confidence in which state it is as under the first-best experiment. But when the agent chooses decision $d_1$ he does so with much less confidence about the state than under the first-best experiment. 

The optimal experiment for contract $\max\{0,y-\beta\}$ shown in Table~\ref{experiment for truncated contract} shows that in state $\theta_2$ the agent chooses the incorrect experiment $d_2$ with a similar probability as under the first best experiment. However the agent now incorrectly chooses decision $d_1$ in state $\theta_1$ about $20$ percent of the time compared to $2$ percent under the first best experiment. This effect arises because truncating the contract $y-\beta$ eliminates a greater downside for decision $d_1$ than $d_2$ and so encourages the agent to choose $d_1$. 

A way to measure the change in the principal's utility due to truncating the contract $y-\beta$ is shown in Figure~\ref{generalapproach}. Point $A$ gives the expected value of output less the constant $E_\pi[\beta(\theta)]$ and point $B$ gives the expected contract payment. Since under the contract $y-\beta$ these points coincide and now point $A$ is below $B$, the move from contract $y-\beta$ to $\max\{0,y-\beta\}$ reduces the principal's utility. 

\begin{figure}[H]
\centering
\begin{tikzpicture}[xscale=5,yscale=1.5]
 \draw[semithick,-] (0,0) -- (1,0)  ; 
\draw [semithick, -] (0,-1.5) node[below]{$0$} -- (0,4) node[above]{$\theta_1$}; 
\draw [semithick, -] (1,-1.5) node[below]{$1$}  -- (1,4) node[above]{$\theta_2$}; 
\draw [semithick, domain=0.0001:0.9999] plot (\x, {((\x)*ln(\x)+(1-\x)*ln(1-\x))}); 
\draw [dashed, semithick,-] (0.323,-1.5) -- (1,3.404) node[right] {$y(d_1,\theta_2)-\beta(\theta_2)$} ;
\draw [dashed, semithick,-] (0,-1.5)node[left] {$y(d_1,\theta_1)-\beta(\theta_1)$} ;
\draw [dashed, semithick,-] (0, 1.164) node[left] {$y(d_2,\theta_1)-\beta(\theta_1)$}-- (0.965,-1.5);
\draw [dashed, semithick,-] (1,-1.5) node[right] {$y(d_2,\theta_2)-\beta(\theta_2)$};
\draw [semithick, domain=0.0001:0.500155] plot (\x, {-2.7606*\x + 1.164}); 
\draw [semithick, domain=0.500155:1]   plot (\x, {7.244*\x - 3.8397});
\draw [semithick, domain=0.0001:0.500155]  plot (\x, {-((\x)*ln(\x)+(1-\x)*ln(1-\x))-2.7606*\x + 1.164});
\draw [semithick, domain=0.500155:1]  plot (\x, {-((\x)*ln(\x)+(1-\x)*ln(1-\x))+7.244*\x - 3.8397});
\draw [semithick, domain=0:1]  plot (\x, {2.24248*\x + 1.16853)});
\draw [semithick, dashed ,-] (0.00669168,0) -- (0.00669168,4) node[left] {$p(\theta_2|d_2)$};
\draw [semithick, dashed ,-] (0.993332,0)  -- (0.993332,4) node[right] {$p(\theta_2|d_1)$};
\draw [semithick, dashed ,-] (1/3,0) -- (1/3,4) node[right] {$\pi(\theta_2)$};
\draw [BrickRed, very thick, domain=0.0231:0.695]  plot (\x, {1.828*\x+1.095});

\draw [BrickRed, dashed, very thick,-] (0,0) -- (1,3.404) ;
\draw [BrickRed, dashed, very thick,-] (0, 1.164) -- (1,0);
\draw [BrickRed, very thick, domain=0.0001:0.255]  plot (\x, {1.164 - (1.164)*\x});
\draw [BrickRed, very thick, domain=0.255:1]  plot (\x, {(3.404)*\x});
\draw [BrickRed, very thick, domain=0.0001:0.255]  plot (\x, {-((\x)*ln(\x)+(1-\x)*ln(1-\x))+1.1640 - (1.1640)*\x});
\draw [BrickRed, very thick, domain=0.255:1]  plot (\x, {-((\x)*ln(\x)+(1-\x)*ln(1-\x))+(3.404)*\x});
\draw [BrickRed, very thick, domain=0:1]  plot (\x, {2.575*\x + 1.191)});
\draw [BrickRed, very thick, dashed ,-] (0.023,0) -- (0.023,4) node[right] {$p(\theta_2|d_2)$};
\draw [BrickRed, very thick, dashed ,-] (0.695,0)  -- (0.695,4) node[right] {$p(\theta_2|d_1)$};
\draw [BrickRed, very thick, domain=0.0231:0.695]  plot (\x, {1.828*\x+1.095});
\draw [black, semithick, domain=0.0231:0.695]  plot (\x, {0.139*\x+1.097});
\draw (1/3,1.704) node[right]{$B$};
\draw (1/3,1.704) node[circle,fill,BrickRed,draw,scale=.3]{};
\draw (1/3,1.144) node[right]{$A$};
\draw (1/3,1.144) node[circle,fill,draw,scale=.3]{};
\end{tikzpicture}
\caption{\small Point $A$ gives expected output minus $E_\pi[\beta(\theta)]$. Point $B$ gives the expected contract payment.}\label{generalapproach}
\end{figure}
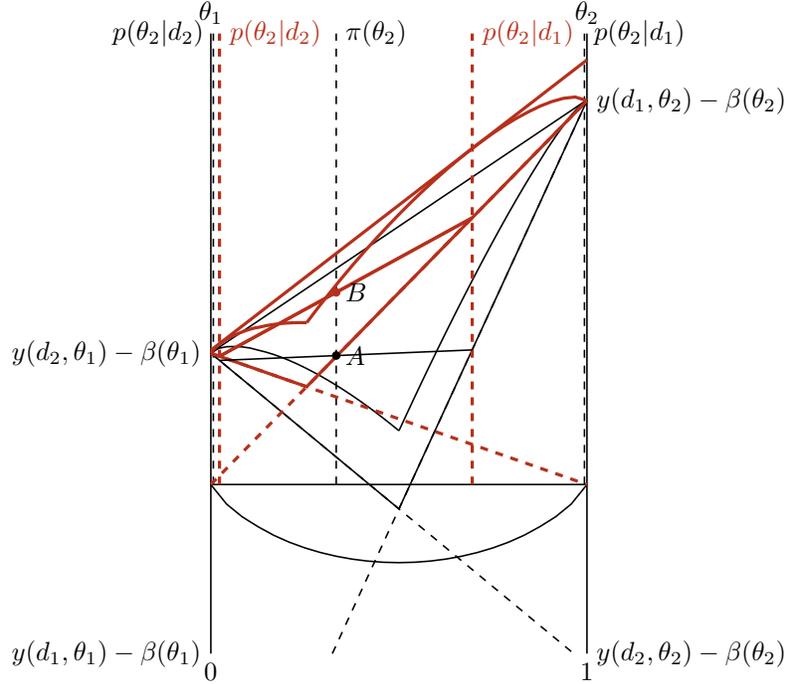

Figure~\ref{optimalcontract} shows the optimal posterior beliefs for the optimal contract $b=y-\beta-\gamma$. Table~\ref{optimalexperimentfortheoptimecontract} gives the optimal experiment and Table~\ref{table comparison of payoffs and costs} lists the expected value of output, the contract, and the experiment costs under the three different contracts.

\begin{figure}[H]
\centering
\begin{tabular}{ r !{\vrule width0.9pt} c  c c   }
  &  $E_p[y]$ &  $E_p[b]$ & $c(p)$   \\ \Xhline{0.9pt}
 $y-\beta$  & 6.633 & 1.877 & 0.596  \\
$\max\{0,y-\beta\}$& 5.900  & 1.704 & 0.293  \\
$b=y-\beta-\gamma$& 5.321  & 0.566 & 0.067  
\end{tabular}\captionof{table}{\small Comparison of expected payoffs and costs}\label{table comparison of payoffs and costs}
\end{figure}

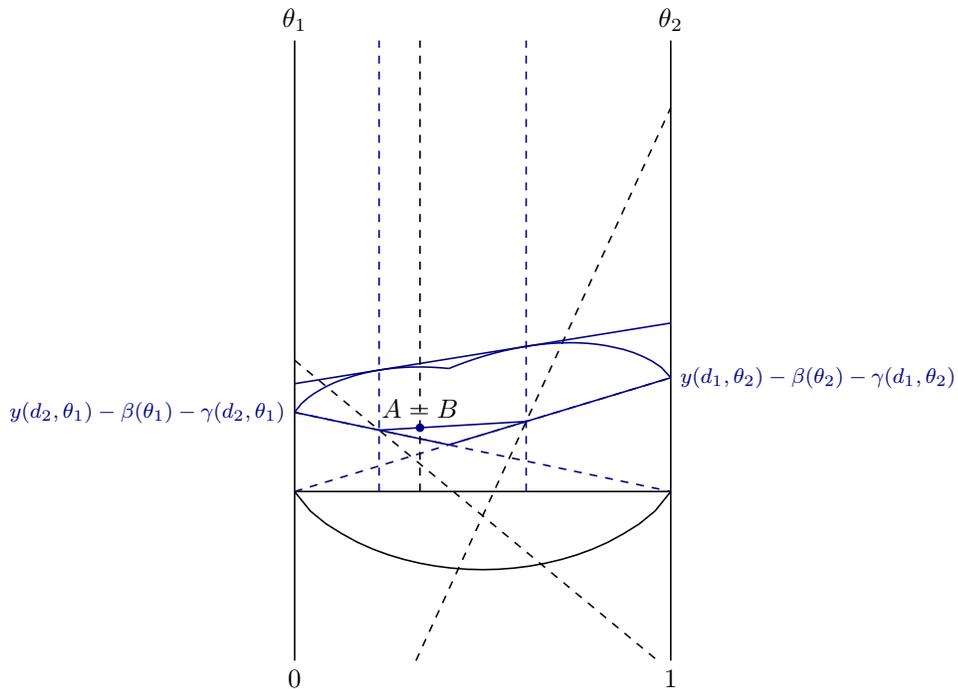
\begin{figure}[H]
\centering
\begin{tikzpicture}[xscale=5,yscale=1.5]
 \draw[semithick,-] (0,0) -- (1,0)  ; 
\draw [semithick, -] (0,-1.5) node[below]{$0$} -- (0,4) node[above]{$\theta_1$}; 
\draw [semithick, -] (1,-1.5) node[below]{$1$}  -- (1,4) node[above]{$\theta_2$}; 
\draw [semithick, domain=0.0001:0.9999] plot (\x, {((\x)*ln(\x)+(1-\x)*ln(1-\x))}); 
\draw [dashed, semithick,-] (0.322643,-1.5) -- (1,3.4042) ;
\draw [dashed, semithick,-] (0, 1.164)-- (0.965287,-1.5);
\draw [NavyBlue, dashed, semithick,-] (0,0)  -- (1,1.0089) node[right] {\footnotesize{$y(d_1,\theta_2)-\beta(\theta_2)-\gamma(d_1,\theta_2)$}} ;
\draw [NavyBlue, dashed, semithick,-] (0, 0.7019) node[left] {\footnotesize{$y(d_2,\theta_1)-\beta(\theta_1)-\gamma(d_2,\theta_1)$}}-- (1,0);
\draw [NavyBlue, semithick, domain=0.0001:0.410276]  plot (\x, {0.7019 - 0.7019*\x});
\draw [NavyBlue, semithick, domain=0.410276:1]  plot (\x, {1.0089*\x});
\draw [NavyBlue, semithick, domain=0.0001:0.410276]  plot (\x, {-((\x)*ln(\x)+(1-\x)*ln(1-\x))+0.7019 - 0.7019*\x});
\draw [NavyBlue, semithick, domain=0.410276:1]  plot (\x, {-((\x)*ln(\x)+(1-\x)*ln(1-\x))+1.0089*\x});
\draw [NavyBlue, semithick, domain=0:1]  plot (\x, {0.537939*\x+0.956101});
\draw [NavyBlue, semithick, dashed ,-] (0.2245,0) -- (0.2245,4);
\draw [NavyBlue, semithick, dashed ,-] (0.6156,0)  -- (0.6156,4);
\draw [semithick, dashed ,-] (1/3,0) -- (1/3,4);
\draw (1/3,0.565682) node[NavyBlue, circle,fill,draw,scale=.3]{};
\draw [NavyBlue, semithick, domain=0.2245:0.6156]  plot (\x, {0.196255*\x+0.500264});
\draw [black,semithick,-] (1/3,0.565682)  node[above] {$A=B$};
=\end{tikzpicture}
\caption{\small The optimal contract. The lines defining $A$ and $B$ coincide so the optimal contract $b=y-\beta-\gamma$ has the same expected value as the contract $y-\beta$ at the optimal posteriors.}\label{optimalcontract}
\end{figure}

You can see that this contract is locally optimal since the lines defining $A$ and $B$ coincide and so for small perturbations the change in the expected contract payment will be roughly equal to the change in expected output. This also implies that the expected value of the optimal contract at the optimal posterior beliefs for this contract coincide with the expected value of the first best contract $y-\beta$ at these posterior beliefs.


Figure~\ref{awayfrommax} shows a perturbation away from the optimal contract. You can see that the expected contract payment falls less than expected output so the perturbation is not profitable for the principal. 

\begin{figure}[H]
\centering
\begin{tikzpicture}[xscale=5,yscale=1.5]
\draw[semithick,-] (0,0) -- (1,0)  ; 
\draw [semithick, -] (0,-1.5) node[below]{$0$} -- (0,4) node[above]{$\theta_1$}; 
\draw [semithick, -] (1,-1.5) node[below]{$1$}  -- (1,4) node[above]{$\theta_2$}; 
\draw [semithick, domain=0.0001:0.9999] plot (\x, {((\x)*ln(\x)+(1-\x)*ln(1-\x))}); 
\draw [dashed, semithick,-] (0.322643,-1.5) -- (1,3.4042) ;
\draw [dashed, semithick,-] (0, 1.164)-- (0.965287,-1.5);
\draw [NavyBlue, dashed, semithick,-] (0,0)  -- (1,1.0089) node[right] {\footnotesize{$y(d_1,\theta_2)-\beta(\theta_2)-\gamma(d_1,\theta_2)$}} ;
\draw [NavyBlue, dashed, semithick,-] (0, 0.7019) node[left] {\footnotesize{$y(d_2,\theta_1)-\beta(\theta_1)-\gamma(d_2,\theta_1)$}}-- (1,0) ;
\draw [NavyBlue, semithick, domain=0.0001:0.410276]  plot (\x, {0.7019 - 0.7019*\x});
\draw [NavyBlue, semithick, domain=0.410276:1]  plot (\x, {1.0089*\x});
\draw [NavyBlue, semithick, domain=0.0001:0.410276]  plot (\x, {-((\x)*ln(\x)+(1-\x)*ln(1-\x))+0.7019 - 0.7019*\x});
\draw [NavyBlue, semithick, domain=0.410276:1]  plot (\x, {-((\x)*ln(\x)+(1-\x)*ln(1-\x))+1.0089*\x});
\draw [NavyBlue, semithick, domain=0:1]  plot (\x, {0.537939*\x+0.956101});
\draw [NavyBlue, semithick, dashed ,-] (0.2245,0) -- (0.2245,4);
\draw [NavyBlue, semithick, dashed ,-] (0.6156,0)  -- (0.6156,4);
\draw [NavyBlue, semithick, domain=0.2245:0.6156]  plot (\x, {0.196255*\x+0.500264});
\draw [NavyBlue, semithick, domain=0.2245:0.6156]  plot (\x, {0.196255*\x+0.500264});

\draw [BrickRed, dashed, very thick,-] (0,0) -- (1,0.85) ;
\draw [NavyBlue, dashed, semithick,-] (0, 0.7019)-- (1,0);
\draw [BrickRed, very thick, domain=0.0001:0.452284]  plot (\x, {0.7019 - 0.7019*\x});
\draw [BrickRed, very thick, domain=0.452284:1]  plot (\x, {0.85*\x});
\draw [BrickRed, very thick, domain=0.0001:0.452284]  plot (\x, {-((\x)*ln(\x)+(1-\x)*ln(1-\x))+0.7019 - 0.7019*\x});
\draw [BrickRed, very thick, domain=0.452284:1]  plot (\x, {-((\x)*ln(\x)+(1-\x)*ln(1-\x))+0.85*\x});
\draw [BrickRed, very thick, domain=0:1]  plot (\x, {0.274976*\x+1.02143});
\draw [BrickRed, very thick, dashed ,-] (0.273512,0) -- (0.273512,4);
\draw [BrickRed, very thick, dashed ,-] (0.639922,0)  -- (0.639922,4);
\draw [semithick, dashed ,-] (1/3,0) -- (1/3,4);
\draw [NavyBlue, semithick, domain=0.2245:0.6156]  plot (\x, {0.196255*\x+0.500264});
\draw [BrickRed, very thick ,-] (0.273512,0.509922) -- (0.639922,0.543933);
\draw [semithick ,-] (0.273512,0.409107) -- (0.639922,0.797032);
\draw (1/3,0.515475) node[circle,fill,BrickRed,draw,scale=.18]{};
\draw (1/3,0.472441) node[circle,fill,draw,scale=.18]{};
\draw (1/3,0.565682) node[NavyBlue,circle,fill,draw,scale=.18]{};
\end{tikzpicture}
\caption{\small A perturbation (thick line) away from the optimal contract. The expected contract payment (middle dot) falls less than expected output (lower dot).}\label{awayfrommax}
\end{figure}
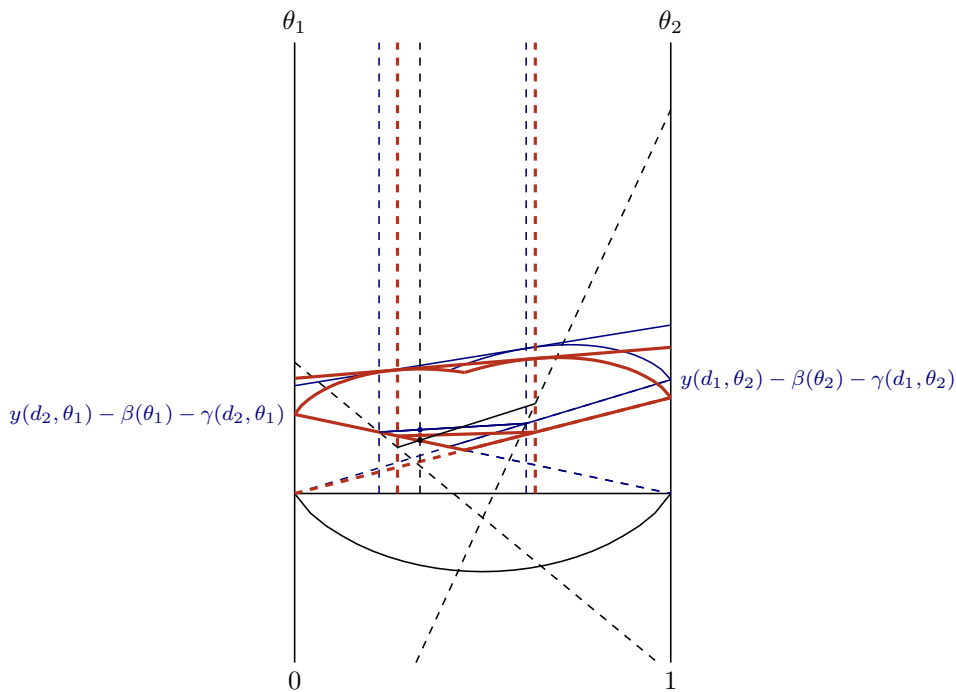

\section{Discussion}

We do not impose restrictions on contract form, such as monotonicity or linearity, or on the way the agent can influence the distribution over output, such as monotone likelihood ratio or distribution function convexity properties. Our model is at its core an information acquisition version of the static model in \cite{holmstrom1987aggregation} on linear contracts, in which the domain of the agent's cost function is a feasible set of distributions, rather than a set of effort parameters. This side-steps the need to specify a mapping between effort parameters and the distribution over output.

This approach also starts the analysis in the right place for thinking about linear contracts. A theme in that literature is that linear contracts arise because they are robust to uncertainty about the contracting environment and to freedom in the way the agent can influence the distribution over output (e.g.~\cite{carroll2015robustness,diamond1998managerial}). To see the second of these points, consider the following argument for why linear contracts align the incentives of a principal and agent: where it can be avoided non-linearity is wasteful because it drives apart the agent's and principal's preferred distributions over output; a convex payoff for the agent nearby some output level means a concave payoff for the principal, and thus an agent who wishes for higher variation in output than the principal nearby this level of output. This argument is strengthened by the agent having freedom in how he can influence the distribution over output. We take advantage of this by starting out with a model that allows the agent to choose any distribution over output. This starting point also forces us to think carefully about the restrictions we do place on the agent's feasible set, such as the capacity constraint.

Our results demonstrate the importance of capacity constraints in principal-agent models by showing that they pin down the fraction of output paid to the agent. In the context of modeling information acquisition, usual practice dictates that a decision maker either pays a cost for information in terms of utility, or faces a budget set of feasible experiments given by a capacity constraint. These approaches are interchangeable in decision problems---one is dual to the other. However, in principal-agent models the optimal contract varies significantly depending on whether or not a capacity constraint is imposed. 
The results we obtain regarding capacity constraints do not depend on the agent choosing an experiment. Rather, they apply to any standard principal-agent setup with a risk-neutral principal and agent. For example, one could augment the security design model of \cite{innes1990limited} with a capacity constraint and apply our results. The conclusion of that paper in which debt is optimal then changes, the optimal contract now being debt and a fraction of equity. This connects our work to the literature that seeks to derive optimal capital structure from agency considerations (e.g.~\cite{jensen1976theory}). 

\subsection{Debt and equity}

Our model describes interaction between an entrepreneur and investor in which the investor's role is to provide funding and the entrepreneur's job is to acquire information and take a decision. In the simplest case, when the agent's information cost is proportional to expected Shannon entropy reduction, our model predicts that
\[
y(d,\theta) - b(d,\theta)
\] 
\[= \min\{y(d,\theta),(\beta(\theta)+\hat \gamma(d))/\alpha^*\} + (1-\alpha^*)\max\{0,y(d,\theta)-(\beta(\theta)+\hat \gamma(d))/\alpha^*\}.
\]
Thus, the payoffs of the project are split into three pieces. A debt of $(\beta(\theta)+\hat \gamma(d))/\alpha^*$ and a $1-\alpha^*:\alpha^*$ split between investor and entrepreneur of equity $\max\{0,y(d,\theta)-(\beta(\theta)+\hat \gamma(d))/\alpha^*\}$.

\paragraph{Debt.} 
The debt's face value is
\[
\frac{\beta(\theta) + \hat \gamma(d)}{\alpha^*}.
\]
The state-dependent transfer $\beta$ corresponds to indexation of the debt (certain classes of bonds exhibit this feature, their coupon rate being a reference rate plus a quoted spread).
The decision-dependent transfer $\hat \gamma$ corresponds to a debt covenant, automatically increasing debts following a course of action that risks bankruptcy.

\paragraph{Equity.} 

Debt is repaid first. What remains is equity, which is split between the investor and entrepreneur. Fraction $\alpha^*$ remains inside the firm to motivate work; additional equity kept inside has no effect since capacity constraints bind, so it used outside the firm to raise capital.

\paragraph{Debt vs. equity.} 
A capacity constrained entrepreneur first sells equity since it does not distort his inventives. After a certain amount of equity is issued his capacity constraint will stop binding. If, before this point, the investor receives his required return, entrepreneur effort is first-best. Debt is issued only if more capital is needed.

\subsection{Restricted investment contracts}

\cite{Carroll16} studies a principal-agent problem of information acquisition where the principal does not know all experiments or experiment costs available to the agent. His principal ranks contracts according to their minimum expected payoff among all experiments and experiment costs including a known set. He shows that the optimal contract is a restricted investment contract: the set of decisions available to the agent is restricted and for unrestricted decisions the agent is paid a fraction of output less a state-dependent transfer.

Carroll's result assumes that for each convex combination of decisions there is a decision that dominates the combination: for each $t\in [0,1]$ and $d,d'\in D$ there is a decision $d''$ such that $y(d'',\theta)\geq t y(d,\theta) + (1-t)y(d',\theta)$ for all $\theta\in \Theta$. With a finite number of decisions this assumption implies that there is a dominant decision: a decision $d'$ such that $y(d',\theta)\geq y(d,\theta)$ for all $d\in D,\theta\in \Theta$. This eliminates the need for information acquisition. Carroll suggests that the set of decisions may be replaced by convex combinations of decisions. This would be problematic in our setting because a randomization over decisions is not verifiable (which for a restricted investment contract would be important to determine if the selected decision is restricted). Thus in Carroll's model the agent makes a recommendation and the principal is left to take the decision.

To compare our results to Carroll's, it is useful to think of his restriction on decisions as a payment $ \gamma:D\times \Theta \to \BR$ that sufficiently punishes restricted decisions that they are left unchosen and is zero for unrestricted decisions. Thus, Carroll obtains the contract $b = \alpha y-\beta- \gamma $ where $\alpha \in [0,1]$, $\beta:\Theta \to \BR$ is defined as $\beta(\theta) = \min\{\alpha y(d,\theta) - \gamma(d,\theta):d\in D_{UR}\}$, and $\gamma: D\times \Theta \to \BR$ is defined to be $0$ on unrestricted decisions $D_{UR}$ and such that $b(d,\theta)=0$ for restricted decisions $D\setminus D_{UR}$. In contrast to this, we obtain the contract $b=\alpha y-\beta-\gamma$ where $\alpha= \alpha^*\in [0,1]$ (as defined in Proposition~\ref{capacitythm}), $\beta:\Theta \to \BR$ is defined as $\beta(\theta) = \min\{\alpha y(d,\theta)-\gamma(d,\theta):d\in D\}$, and $\gamma:D\times \Theta\to \BR$ is given by
\begin{align*}
\gamma(d,\theta) = \frac{1}{\pi(\theta)}\sum_{ d',\theta' }\frac{\partial^2 c(p)}{\partial p(d|\theta)\partial p(d'|\theta')}\left((1-\xi)p(d'|\theta') - \frac{\lambda[d',\theta'] }{\pi(\theta')}\right),
\end{align*}
where $\lambda[d,\theta]$ and $\xi$ are dual variables for the liability limits $0\leq b(d,\theta)\leq y(d,\theta)$ and participation constraint. When our cost function is proportional to expected Shannon entropy reduction, $\gamma(d,\theta) = \sum_{\theta'}\lambda[d,\theta']/p(d)-\lambda[d,\theta]/p(d|\theta)\pi(\theta)$ which conditional on liability limits not binding does not depend on the state.

In both sets of results $\gamma$ punishes decisions that are likely to induce liability limits to bind. In Carroll's model, $\gamma$ either rules out decisions or does nothing. In our model $\gamma$ is less absolute: decisions are punished in proportion to how often and at what cost they induce binding liability limits. In both models the state-dependent transfers serve a normalizing role, setting the minimum contract payment in each state to zero.

Finally, the strength of the agent's incentives is measured by $\alpha$ in both Carroll's results and our own. Carroll does not discuss comparative statics for this parameter, but we are able to show in our model that it is increasing in the agent's capacity. It would be interesting to see whether a similar comparative static holds in Carroll's model, perhaps in terms of enlargements in the set of experiments and costs known to be available to the agent.

\subsection{Other related literature}

Acquiring information may be socially destructive. Effort spent may be wasteful, as in \cite{hirshleifer1971private} and \cite{spence1973job}, or it may lead to trade-destroying information asymmetry, as in \cite{akerlof1970market} and \cite{rothschild1976equilibrium}. These forces affect contract design. For example, in a model where an agent must report a project's outcome, \cite{townsend1979optimal} finds that debt contracts are optimal because they efficiently ration monitoring costs. Motivated by the second force, the same conclusion is drawn by \cite{dang2012ignorance} and \cite{yang2015optimality}: debt contracts are optimal to securitize state-dependent cash-flows because they avoid trade-destroying information asymmetry between the buyer and seller. Other examples include~\cite{cremer1992gathering}, \cite{cremer1998strategic}, \cite{yang2015financing}.

Recently, a theoretical literature has emerged that studies dynamic contracts that incentivize experimentation (e.g. \cite{bergemann1998venture}, \cite{gerardi2012principal}, \cite{guo2016dynamic}, \cite{halac2016optimal}, \cite{horner2013incentives}, \cite{manso2011motivating}). This literature often uses results from the strategic experimentation literature (e.g.~\cite{bolton1999strategic}, \cite{keller2005strategic}) but typically specializes to the case of a one or two armed bandit problem. An exception is \cite{chassang2013calibrated} who considers a very general model of dynamic information acquisition. He  shows that limited liability becomes approximately non-binding with a long enough time horizon and he uses this to characterize a class of contracts that work well in a variety of environments.

\cite{sims2003implications} and \cite{sims2005rational} pioneered the use of expected reduction in Shannon entropy as a cost function to model bounded rationality; its use can also be found in~\cite{arrow1973information}, \cite{arrow1985informational}, and in the statistical decision theory literature (e.g. \cite{degroot1962uncertainty}, \cite{lindley1956measure}). Recent efforts have aimed to derive behavioral foundations for different forms of costly information acquisition and to test popular cost functions using state-dependent stochastic choice data obtained from experiments (e.g.~\cite{caplin2013behavioral}, \cite{caplin2015revealed}, \cite{HebertWoodford16}, \cite{oliveira2017rationally}, \cite{woodford2012inattentive}).

\section{Conclusion}\label{Summary}

We have studied contracting for the production of an experiment and provided necessary and sufficient conditions for optimal contracts to take an intuitive linear form. All Pareto optimal contracts pay a fraction of output, a state-dependent transfer, and---with information costs proportional to expected reduction in Shannon entropy---a decision-dependent transfer. For a general cost function, our characterization shows how complementarities in the cost of receiving different signals in different states affects the design of the optimal contract. Our analysis provides explanations and comparative statics for the strength of incentives, the role of indexing, and the role of punishments and rewards.


\newpage

\begin{appendices}

\section{Results used in propositions~\ref{prop2} and~\ref{prop3}}

\begin{dfn}
(Tangent vector). A vector $w\in \BR^{n}$ is tangent to a set $C \subseteq \BR^n$ at a point $\bar x \in C$, written $w \in T_C(\bar x)$, if there exists a sequence $x_1,x_2,x_3,\cdots$ in $C$ converging to $\bar x$ along with a sequence of positive scalars $\tau_1,\tau_2,\tau_3,\cdots$ converging to $0$, such that the sequence $(x_1-\bar x)/\tau_1,(x_2-\bar x)/\tau_2,(x_3-\bar x)/\tau_3,\cdots$ converges to $w$.
\end{dfn}

\begin{dfn}
(Normal vector). Let $C \subseteq \BR^n$ and $\bar x\in C$. A vector $v$ is normal to $C$ at $\bar x$ in the regular sense, written $v\in \hat N_C(\bar x)$, if

\[v\cdot w \leq 0 \text{ for all } w\in T_{C}(\bar x).\]

It is normal to $C$ (in the general sense), written $v\in N_C(\bar x)$, if there is a sequence $x_1,x_2,x_3,\cdots$ in $C$ converging to $\bar x$, and a sequence $v_1,v_2,v_3,\cdots$ in $\hat N_C(x_1),\hat N_C(x_2),\hat N_C(x_3), \cdots$ converging to $v$.
\end{dfn}

\begin{dfn}\label{Clarke}
(Clarke regularity of sets). A closed set $C\subseteq \BR^n$ is regular at one of its points $\bar x$ in the sense of Clarke if every normal vector to $C$ at $\bar x$ is a regular normal vector, i.e. $N_C(\bar x) = \hat N_C(\bar x)$.
\end{dfn}

\begin{thm}\label{wetrock6.12} (6.12. of \cite{rockafellar2009variational}) (Basic first order conditions for optimality).
Consider a problem of maximizing a differentiable function $f_0$ over a set $C\subseteq \BR^n$. A necessary condition for $\bar x$ to be locally optimal is

\[\nabla f_0(\bar x) \in \hat N_C(\bar x).\]

When $C$ is convex and $f_0$ concave, this condition is sufficient for $\bar x$ to be globally optimal.
\end{thm}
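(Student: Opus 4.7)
The plan is to treat the two directions separately, using the definitions of the tangent and regular normal cones as the main workhorses. Both directions follow by linearizing $f_0$ at $\bar x$ and exploiting the geometric structure of $T_C(\bar x)$, so no deep variational analysis is needed; the work is bookkeeping of the remainder terms and of the convexity-based identification of tangent vectors.

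For necessity, suppose $\bar x$ is locally optimal. I would fix an arbitrary $w \in T_C(\bar x)$ and invoke the definition of tangent vector to get sequences $x_k \to \bar x$ in $C$ and $\tau_k \downarrow 0$ with $(x_k-\bar x)/\tau_k \to w$. Differentiability of $f_0$ at $\bar x$ gives
\[
f_0(x_k) - f_0(\bar x) \;=\; \nabla f_0(\bar x)\cdot (x_k - \bar x) + o(\|x_k-\bar x\|).
\]
Local optimality forces the left-hand side to be $\leq 0$ for all large $k$. Dividing by $\tau_k$ and letting $k\to\infty$, the remainder vanishes because $\|x_k-\bar x\|/\tau_k \to \|w\|$ is bounded, and one is left with $\nabla f_0(\bar x)\cdot w \leq 0$. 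Since $w\in T_C(\bar x)$ was arbitrary, this is exactly the statement that $\nabla f_0(\bar x)\in \hat N_C(\bar x)$.

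For sufficiency under convex $C$ and concave $f_0$, I would first observe that when $C$ is convex, every chord direction is tangent: for any $x\in C$ and $\tau_k=1/k$, the points $x_k = \bar x + \tau_k (x-\bar x) = (1-\tau_k)\bar x + \tau_k x$ lie in $C$, converge to $\bar x$, and $(x_k-\bar x)/\tau_k = x-\bar x$, so $x-\bar x \in T_C(\bar x)$. Hence the assumption $\nabla f_0(\bar x)\in \hat N_C(\bar x)$ gives $\nabla f_0(\bar x)\cdot (x-\bar x)\leq 0$ for every $x\in C$. Concavity of $f_0$ then yields the gradient inequality $f_0(x) \leq f_0(\bar x) + \nabla f_0(\bar x)\cdot (x-\bar x) \leq f_0(\bar x)$, proving global optimality of $\bar x$.

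The two potential pitfalls I would watch for are, first, making sure the $o(\|x_k-\bar x\|)$ term genuinely vanishes after division by $\tau_k$ (which is why I record $\|x_k-\bar x\|/\tau_k \to \|w\|$ explicitly rather than just saying $x_k\to\bar x$), and second, the identification $T_C(\bar x)\supseteq \{x-\bar x:x\in C\}$ for convex $C$, which is what lets the regular-normal inequality propagate from tangent vectors to full chords and then via concavity to a global comparison. Neither step is deep, and the argument does not need Clarke regularity or the more delicate limiting-normal cone $N_C$, precisely because the necessary condition is stated in terms of the regular normal cone $\hat N_C$.
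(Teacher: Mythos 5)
Your proof is correct. Note, however, that the paper itself offers no proof of this statement: it is quoted verbatim in the appendix as Theorem 6.12 of \cite{rockafellar2009variational} and used as a black box in Propositions~\ref{prop2} and~\ref{prop3}, so there is no internal argument to compare against. Your derivation is the standard one and is exactly what the cited reference does, built directly from the paper's own definitions of the tangent cone and the regular normal cone: for necessity, you pass the first-order expansion of $f_0$ along a tangent-defining sequence, divide by $\tau_k$, and use that $\|x_k-\bar x\|/\tau_k \to \|w\|$ to kill the remainder (the only cosmetic gap is the harmless case $x_k=\bar x$, where the remainder is identically zero); for sufficiency, you correctly observe that convexity of $C$ puts every chord direction $x-\bar x$ into $T_C(\bar x)$, so membership in $\hat N_C(\bar x)$ yields $\nabla f_0(\bar x)\cdot(x-\bar x)\leq 0$ for all $x\in C$, and the concave gradient inequality finishes the job. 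Your closing remark is also apt: the result genuinely needs only the regular normal cone $\hat N_C$, not the limiting cone $N_C$ or Clarke regularity, which the paper invokes only later when it needs the Lagrange-multiplier representation of $N_C$ in Proposition~\ref{prop3}.
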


\begin{thm}\label{wetrock6.14} (6.14 of \cite{rockafellar2009variational}) (Normal cones to sets with a constraint structure).
Let

\[C = \{x\in A : F(x) \in B\}\]

for closed sets $A\subseteq \BR^n$ and $B\subseteq \BR^m$ and a $\mathcal{C}^1$ mapping $F:\BR^n \to \BR^m$, written component-wise as $F(x) = (f_1(x),\cdots,f_m(x))$. Suppose that the following assumption, to be called the standard constraint qualification at $\bar x$, is satisfied:

\[
\begin{cases}
\text{the only vector } y\in N_B(F(\bar x)) \text{ for which}\\
-(y_1 \nabla f_1(\bar x) + \cdots + y_m \nabla f_m(\bar x)) \in N_A(\bar x) \text{ is } y = (0,\cdots,0).
\end{cases}
\]

Then, if $A$ is regular at $\bar x$ and $B$ is regular at $F(\bar x)$, $C$ is regular at $\bar x$ and

\[N_C(\bar x) = \{y_1 \nabla f_1(\bar x) + \cdots + y_m \nabla f_m(\bar x) + z : y \in N_B(F(\bar x)), z\in N_A(\bar x)\}.\]

\end{thm}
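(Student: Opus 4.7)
The plan is to prove the normal cone formula by first establishing the tangent cone identity $T_C(\bar x)=T_A(\bar x)\cap \nabla F(\bar x)^{-1}T_B(F(\bar x))$ and then passing to polars, using the Clarke regularity of $A$ at $\bar x$ and of $B$ at $F(\bar x)$ together with the constraint qualification to upgrade regular normal cones to limiting normal cones and to deduce the regularity of $C$.

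The easy inclusion $T_C(\bar x)\subseteq T_A(\bar x)\cap \nabla F(\bar x)^{-1}T_B(F(\bar x))$ is immediate from the definition: any sequence $x_k\to\bar x$ in $C$ with $\tau_k\downarrow 0$ and $(x_k-\bar x)/\tau_k\to w$ has $x_k\in A$, so $w\in T_A(\bar x)$, and by the $\mathcal{C}^1$ smoothness of $F$ we have $(F(x_k)-F(\bar x))/\tau_k\to \nabla F(\bar x)w$ with $F(x_k)\in B$, giving $\nabla F(\bar x)w\in T_B(F(\bar x))$. The reverse inclusion is the main obstacle. Given $w$ in the right-hand side, I would construct a witnessing sequence for $w\in T_C(\bar x)$ by a Lyusternik-style correction: take $a_k\in A$ with $(a_k-\bar x)/\tau_k\to w$ and $b_k\in B$ with $(b_k-F(\bar x))/\tau_k\to \nabla F(\bar x)w$, then perturb $a_k$ within $A$ to some $x_k$ with $F(x_k)\in B$ and $F(x_k)-b_k=o(\tau_k)$. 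The constraint qualification, read dually as the absence of a nontrivial $y\in N_B(F(\bar x))$ with $-\nabla F(\bar x)^* y\in N_A(\bar x)$, is precisely the metric regularity condition that underwrites this correction; the standard execution uses Ekeland's variational principle applied to the function $x\mapsto d_B(F(x))$ restricted to $A$, together with a subgradient calculus rule identifying the limiting subgradients of $d_B\circ F$ with elements of $\nabla F(\bar x)^* N_B(F(\bar x))$ modulo $N_A(\bar x)$.

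With the tangent cone identity established, I would polarize. The Clarke regularity of $A$ and $B$ makes $T_A(\bar x)$ and $T_B(F(\bar x))$ convex and yields $\hat N_A(\bar x)=N_A(\bar x)$ and $\hat N_B(F(\bar x))=N_B(F(\bar x))$. Applying the polar-of-intersection-and-linear-preimage calculus under the constraint qualification, which plays the role of a Slater-type interiority condition in the dual cone calculus, gives $\hat N_C(\bar x)=N_A(\bar x)+\nabla F(\bar x)^* N_B(F(\bar x))$, the claimed formula. To upgrade this to $N_C(\bar x)=\hat N_C(\bar x)$, I would use outer semicontinuity of $N_A$ and $N_B$ to show that the constraint qualification persists at all points of $C$ in a neighborhood of $\bar x$, so the formula for $\hat N_C$ holds there; then any $v\in N_C(\bar x)$ arises as a limit of regular normals $v_k\in \hat N_C(x_k)$ with $x_k\to\bar x$ in $C$, and passing to the limit in the explicit representation places $v$ in $N_A(\bar x)+\nabla F(\bar x)^* N_B(F(\bar x))=\hat N_C(\bar x)$. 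This delivers the regularity of $C$ at $\bar x$ and the stated formula simultaneously.
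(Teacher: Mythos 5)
This theorem is not proved in the paper at all: it is quoted verbatim as Theorem 6.14 of \cite{rockafellar2009variational} and used as a black box in Propositions~\ref{prop2} and~\ref{prop3}. So the only meaningful comparison is with the textbook's argument, which works directly on the normal-cone side via a penalization/chain-rule construction rather than through tangent cones. Your primal route---establish $T_C(\bar x)=T_A(\bar x)\cap\nabla F(\bar x)^{-1}T_B(F(\bar x))$ by a Lyusternik/Ekeland correction justified by the metric regularity encoded in the constraint qualification, then polarize using Clarke regularity of $A$ at $\bar x$ and of $B$ at $F(\bar x)$ to convexify the tangent cones and the qualification to keep the sum of polars closed---is a legitimate alternative and its first two stages are sound in outline. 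The heavy lifting (the reverse tangent inclusion via the coderivative criterion for metric regularity) is deferred to standard machinery, which is acceptable for a sketch, and the easy inclusion $N_A(\bar x)+\nabla F(\bar x)^{*}N_B(F(\bar x))\subseteq\hat N_C(\bar x)$ indeed follows from the trivial direction of the tangent identity together with regularity at the base point.

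The gap is in your final upgrade from $\hat N_C(\bar x)$ to $N_C(\bar x)$. You argue that the constraint qualification persists at points $x_k\in C$ near $\bar x$ (true, by outer semicontinuity of $N_A$ and $N_B$ and a normalization--compactness argument) and that therefore ``the formula for $\hat N_C$ holds there.'' But your derivation of that formula at $\bar x$ used Clarke regularity of $A$ at $\bar x$ and of $B$ at $F(\bar x)$ to identify $T_A^{*}$ with $N_A$ and $T_B^{*}$ with $N_B$ and to make the polar calculus exact; the theorem's hypotheses give regularity only at the base point, not at the nearby $x_k$ or $F(x_k)$. So the polarization argument cannot be re-run at $x_k$, and the equality you invoke there is unavailable. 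What the limiting step actually requires is only the one-sided upper estimate $\hat N_C(x)\subseteq N_A(x)+\nabla F(x)^{*}N_B(F(x))$ for $x\in C$ near $\bar x$, and this must be obtained from the variational (Ekeland/fuzzy sum rule) argument directly---which needs no regularity---rather than from polarity. With that substitution the limit passage does place every $v\in N_C(\bar x)$ in $N_A(\bar x)+\nabla F(\bar x)^{*}N_B(F(\bar x))=\hat N_C(\bar x)$ and yields regularity of $C$; as written, however, the step is circular with respect to the regularity hypotheses.
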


\begin{thm}\label{wetrock6.15} (6.15 of \cite{rockafellar2009variational}) (Normals to boxes).
Suppose $B = B_1\times \cdots \times B_n$, where each $B_j$ is a closed interval in $\BR$. Then $B$ is regular at every one of its points $\bar x = (\bar x_1,\cdots,\bar x_n)$, and its normal cones have the form $N_B(\bar x) = N_{B_1}(\bar x_1) \times \cdots \times N_{B_n}(\bar x_n)$, where

\[
N_{B_j}(\bar x_j) =
\begin{cases}
[0,\infty) & \text{ if } \bar x_j \text{ is only the left endpoint of } B_j, \\
(-\infty,0] & \text{ if } \bar x_j \text{ is only the right endpoint of } B_j, \\
\{0\} & \text{ if } \bar x_j \text{ is an interior point of } B_j, \\
(-\infty,\infty) & \text{ if } B_j \text{ is a one point interval}.
\end{cases}
\]

\end{thm}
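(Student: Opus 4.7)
The plan is to reduce the claim to a coordinate-by-coordinate analysis of closed intervals in $\BR$ and then exploit the fact that tangent and normal cones distribute over Cartesian products. There are three things to establish: the explicit form of $N_{B_j}(\bar x_j)$ in each of the four cases, the product decomposition, and Clarke regularity of $B$.

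First I would compute the tangent cone $T_{B_j}(\bar x_j)$ for a single closed interval directly from the definition. If $\bar x_j$ lies in the interior of $B_j$, any $w \in \BR$ is tangent (take $x_k = \bar x_j + \tau_k w$ with $\tau_k \downarrow 0$, which remains in $B_j$ for large $k$); if $\bar x_j$ is only a left endpoint, only $w \geq 0$ can be realized; symmetrically, only $w \leq 0$ at a right endpoint; and $T_{B_j}(\bar x_j) = \{0\}$ when $B_j$ is a singleton. Taking polars in $\BR$ yields the four cases stated, and I interpret these formulas at this stage as $\hat N_{B_j}(\bar x_j)$. Next I would show $T_B(\bar x) = T_{B_1}(\bar x_1) \times \cdots \times T_{B_n}(\bar x_n)$, which is immediate from the definition since convergence of sequences in a Cartesian product and of their rescaled differences is coordinate-wise; dualizing gives $\hat N_B(\bar x) = \hat N_{B_1}(\bar x_1) \times \cdots \times \hat N_{B_n}(\bar x_n)$.

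The core step is then Clarke regularity: $N_B(\bar x) = \hat N_B(\bar x)$. I would first establish that each $B_j$ is Clarke regular. This follows from convexity of $B_j$: the polar of the tangent cone to a closed convex set agrees with the convex-analysis normal cone, and the latter is outer semicontinuous as a set-valued map of the base point, so any limit of regular normals at nearby points is itself a regular normal. Given regularity of each factor, regularity of the product follows by a diagonal argument: if $v \in N_B(\bar x)$, choose sequences $x^k \to \bar x$ in $B$ and $v^k \to v$ with $v^k \in \hat N_B(x^k)$; the product formula for regular normal cones gives $v^k_j \in \hat N_{B_j}(x^k_j)$ for each coordinate, and regularity of $B_j$ forces the limit $v_j$ to lie in $\hat N_{B_j}(\bar x_j)$, whence $v \in \hat N_B(\bar x)$. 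The reverse inclusion $\hat N_B(\bar x) \subseteq N_B(\bar x)$ is immediate from the definitions. Combining yields $N_B(\bar x) = \hat N_{B_1}(\bar x_1) \times \cdots \times \hat N_{B_n}(\bar x_n)$, and re-applying regularity of each factor rewrites the right side as $N_{B_1}(\bar x_1) \times \cdots \times N_{B_n}(\bar x_n)$.

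The main obstacle is the regularity step; once that is in hand, the rest is just polarity and coordinate-wise bookkeeping. The cleanest way to handle it is to avoid working with the limit definition of $N_C$ directly and instead leverage the fact that closed convex subsets of $\BR^n$ are Clarke regular at every point, applied separately to each factor $B_j$ and then transported to $B$ via the product decomposition of $\hat N$ established above.
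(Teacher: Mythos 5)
The paper does not prove this statement; it is quoted verbatim as an external result (Exercise 6.15 of Rockafellar--Wets), so there is no in-paper argument to compare against and I evaluate your proposal on its own. Your architecture is the standard one and is essentially sound: the tangent cones of the four interval cases are computed correctly, polarity turns them into regular normal cones, products of cones polarize coordinate-wise, and Clarke regularity follows from convexity via outer semicontinuity of the convex-analysis normal cone map. One simplification worth noting: $B$ is itself closed and convex, so the exact outer-semicontinuity argument you apply to each factor $B_j$ applies directly to $B$ and yields $N_B(\bar x)=\hat N_B(\bar x)$ in one step, making the diagonal argument over factors unnecessary. A smaller gloss: the equality $T_B(\bar x)=T_{B_1}(\bar x_1)\times\cdots\times T_{B_n}(\bar x_n)$ is not ``immediate from the definition'' for general closed sets, because realizing a product tangent vector requires a \emph{common} scaling sequence $\tau_k$ across coordinates; only the inclusion $\subseteq$ is free. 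It does hold here because for an interval the tangent cone coincides with the already-closed radial cone, so one may take $x_k=\bar x+\tau_k w$ with a single $\tau_k$; you should say this.

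The genuine problem is the step ``taking polars in $\BR$ yields the four cases stated.'' It does not. Under the paper's own definition, $\hat N_{B_j}(\bar x_j)=\{v: vw\leq 0 \text{ for all } w\in T_{B_j}(\bar x_j)\}$, so at a left endpoint, where you correctly find $T_{B_j}(\bar x_j)=[0,\infty)$, the polar is $(-\infty,0]$, and at a right endpoint the polar is $[0,\infty)$ --- the opposite assignment from the one printed in the statement (and the one printed in Rockafellar--Wets, which attaches $[0,\infty)$ to the \emph{right} endpoint, consistent with nonnegative multipliers for active constraints of the form $f_j(x)\leq 0$). So as written your proof establishes the correct theorem but not the theorem as stated here; asserting that the polars match the displayed cases is false and conceals a sign error in the transcription. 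This is not a pedantic point for this paper: the sign convention feeds directly into the interpretation of the multipliers $\lambda[d,\theta]$ in Proposition~\ref{prop3} (``non-negative when only the agent's liability limit binds,'' i.e.\ at a left endpoint), so you should carry out the polar computation explicitly, state which convention you are proving, and flag the mismatch rather than paper over it.
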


\subsection{Risk-aversion}\label{r-aversion}

Here we consider the case where the agent is risk-averse with a concave Bernoulli utility function $u$ over wealth. When the agent is not capacity constrained the formula for the optimal contract extends straightforwardly. The only change is that $\gamma(d,\theta)$ is now given by
\[
\frac{1}{\pi(\theta)}\sum_{ d',\theta' }\frac{1}{u'\left(b(d,\theta')\right)}\left(p(d'|\theta')(1-u'\left(b(d,\theta')\right)\xi) - \frac{\lambda[d',\theta']}{\pi(\theta')}\right)\frac{\partial^2 c(p)}{\partial p(d|\theta)\partial p(d'|\theta')}.
\]
For the cost functions of~\cite{HebertWoodford16} this becomes
\[
\gamma(d,\theta)= \sum_{\theta'}\left(\frac{k(\theta,\theta',p(\cdot|d))}{p(\theta|d)p(\theta'|d)}\right)\left(\frac{p(d,\theta') -\lambda[d,\theta']}{u'\left(b(d,\theta')\right)}\right).
\]
In this case, $\gamma$ takes on an insurance role as well as countering undesirable incentives for risk-taking induced by limited liability.

\end{appendices}

\newpage

\end{document}